\newcommand{\poly}{\textup{poly}}
\newtheoremstyle{dotless}{}{}{\itshape}{}{\bfseries}{}{ }{}
\theoremstyle{dotless}
\newtheorem{theorem}{Theorem}[section]
\newtheorem{lemma}[theorem]{Lemma}
\newtheorem{prop}[theorem]{Proposition}
\newtheorem{cor}[theorem]{Corollary}
\newtheoremstyle{dotlessdef}{}{}{\normalfont}{}{\bfseries}{}{ }{}
\theoremstyle{dotlessdef}
\newcommand{\Z}{\mathbb{Z}}
\newcommand{\lpr}[1]{\left(#1\right)}
\newcommand{\ceil}[1]{\lceil#1\rceil}
\newcommand{\ang}[1]{\langle#1\rangle}
\newcommand{\bin}{\{0,1\}}
\def\tO{\widetilde{O}}
\newcommand{\eps}{\varepsilon}
\newcommand{\dist}{\text{dist}}
\newcommand{\ints}[1]{\llbracket #1 \rrbracket}
\title{Simpler and Higher Lower Bounds for Shortcut Sets}
\author{Virginia Vassilevska Williams\thanks{Massachusetts Institute of Technology. \texttt{virgi@mit.edu}. Supported by NSF Grants CCF-2129139 and CCF-2330048 and BSF Grant 2020356.}
\and 
Yinzhan Xu\thanks{Massachusetts Institute of Technology. \texttt{xyzhan@mit.edu}. Partially supported by NSF Grants CCF-2129139 and CCF-2330048 and BSF Grant 2020356.}
\and
Zixuan Xu\thanks{Massachusetts Institute of Technology. \texttt{zixuanxu@mit.edu}.}}
\date{\today}
\begin{document}
\pagenumbering{gobble} 
\maketitle

\begin{abstract}
We study the well-known shortcut set problem: how much can one decrease the diameter of a directed graph by adding a small set of shortcuts from the transitive closure of the graph. 

We provide a variety of lower bounds. First, we vastly simplify the recent construction of Bodwin and Hoppenworth [FOCS 2023] which showed an $\widetilde{\Omega}(n^{1/4})$ lower bound for the diameter of a directed unweighted $n$-node graph after adding $O(n)$ shortcut edges. We highlight that our simplification completely removes the use of the convex sets by B{\'a}r{\'a}ny and Larman [Math. Ann. 1998] used in all previous lower bound constructions. Our simplification also removes the need for randomness and further removes some log factors. It allows us to generalize the construction to higher dimensions, which in turn can be used to show the following results:
\begin{itemize}
    \item There is an $\Omega(n^{1/5})$ lower bound for the diameter of the graph after adding $O(m)$ shortcuts, where $m$ denotes the number of edges in the input graph.
    \item  For all $\varepsilon > 0$, there exists a $\delta > 0$ such that there are $n$-vertex $O(n)$-edge graphs $G$ where adding any shortcut set of size $O(n^{2-\varepsilon})$ keeps the diameter of $G$ at $\Omega(n^\delta)$. This improves the sparsity of the constructed graph compared to a known similar result by Hesse [SODA 2003]. 
    \item For any integer $d \ge 2$, there exists a graph $G=(V, E)$ on $n$ vertices and $S \subseteq V$ with $|S| = \widetilde{\Theta}(n^{3/(d+3)})$, such that when adding  $O(n)$ or $O(m)$ shortcuts, the sourcewise  diameter (the largest distance from some vertex in $S$ to some reachable vertex in the graph) is $\widetilde{\Omega}(|S|^{1/3})$. This initiates the study of sourcewise diameter in the setting of the shortcut set problem; previously, the study of the sourcewise variant is popular in a wide variety of related problems such as spanners and distance preservers. Complementing this lower bound result, we also provide an upper bound: we show that, we can reduce the sourcewise diameter to $\widetilde{O}(\sqrt{|S|})$ by adding $O(n)$ shortcut edges. 
\end{itemize}
\end{abstract}

\newpage
\pagenumbering{arabic}

\section{Introduction}
In a variety of computational models, the complexity of shortest paths problems crucially depends on the {\em diameter} of the underlying graph: the largest shortest paths distance between reachable pairs. In dynamic graph data structures, for instance, the update time and query time can both depend on the diameter. In parallel algorithms, e.g. in the PRAM model, the span of many graph problems is a function of the diameter.

With these applications in mind, Thorup \cite{thorup_digraph} introduced the notion of a {\em shortcut set} of a graph $G=(V,E)$: a set of edges $H$ from the {\em transitive closure} of $G$ such that $(V,E\cup H)$ has small diameter.

By definition, shortcut sets preserve  all reachability information in $G$, yet can drastically reduce the diameter, thus significantly lowering the complexity of graph computations in many settings (e.g. \cite{AndoniSZ20,BernsteinGW20,CaoFR20,fineman2019nearly,HenzingerKN14,liu2019parallel,KarczmarzS21,KleinS97}).

Two main settings are of interest in the shortcut set literature: How much can one drop the diameter of an $n$-node, $m$-edge directed unweighted graph by adding a shortcut set of
(1) $O(n)$ edges? (2) $O(m)$ edges?

Both settings preserve the sparsity of the graph, within a constant factor.
Until recently, the only known shortcut set construction in either setting was as follows (see \cite{ullman1991high}). Given any $n$-node graph and any $D$, sample a set $S$ of $O(n \log n / D)$ vertices uniformly at random, and add all $\tO(n^2/D^2)$ edges\footnote{$\tO(t(n))$ is $O(t(n)\poly\log n)$ and $\widetilde{\Omega}(t(n))$ is $\Omega(t(n)/\poly\log(n))$.} from the transitive closure between pairs of vertices in $S$. It is easy to show that this achieves a diameter of $O(D)$. Setting $D$ to $\sqrt n$ shows that $\tO(n)$ shortcuts are sufficient to achieve diameter $O(\sqrt n)$, and setting $D$ to $n/\sqrt m$ shows that $\tO(m)$ shortcuts are sufficient to drop the diameter to $O(n/\sqrt m)$. The polylogs in the $\tO$ were removed by \cite{BermanRR10}.

In a recent breakthrough, Kogan and Parter \cite{koganparter} improved the above folklore construction. They showed that for any $D$, to drop the diameter to $\leq D$, it is sufficient to add a shortcut set of size  $\tO(n^2/D^3)$ if $D<n^{1/3}$ and $\tO((n/D)^{3/2})$ if $D\geq n^{1/3}$.

In the two most commonly studied scenarios, we get that
\begin{itemize}
\item with $\tO(n)$ shortcuts one can achieve diameter $O(n^{1/3})$, and
\item with $\tO(m)$ shortcuts one can achieve diameter $O((n^2/m)^{1/3})$.
\end{itemize}

Several papers achieved lower bound constructions. Improving upon Hesse \cite{Hesse03}, Huang and Pettie~\cite{Huang2018LowerBO} constructed a family of $n$-node, $m$-edge graphs in which one cannot drop the diameter under $\Omega(n^{1/6})$ using $O(n)$ shortcuts, and under $\Omega(n^{1/11})$ using $O(m)$ shortcuts. Lu, Vassilevska Williams, Wein, and Xu~\cite{Lu2021BetterLB} improved the second lower bound, showing that with $O(m)$ shortcuts one cannot drop the diameter below $\Omega(n^{1/8})$. Most recently, Bodwin and Hoppenworth \cite{BH23} improved the first lower bound as well, showing that with $O(n)$ shortcuts one cannot drop the diameter below $\widetilde{\Omega}(n^{1/4})$.

Bodwin and Hoppenworth's construction is in fact very sparse: the graphs have $n$ nodes and $O(n)$ edges. However, for technical reasons\footnote{Their basic construction only shows a lower bound of the diameter for adding $\frac{n}{\alpha \log n}$ shortcut edges for some constant $\alpha$. For the $O(n)$-shortcut setting, they have to apply a path subsampling argument which reduces the number of vertices in the graph to $n'$, so that $\frac{n}{\alpha \log n}$ becomes $c \cdot n'$ for arbitrarily large constant $c$ depending on the parameters for the subsampling. However,  the path subsampling argument does not necessarily reduce the number of edges. As a result, they obtain lower bounds for the $O(n)$-shortcut setting, but not for the $O(m)$-shortcut setting.} their construction does not imply an improvement for the $O(m)$-shortcut setting, where the $\Omega(n^{1/8})$ lower bound remains the best known.

Thus, using $O(n)$ shortcuts one can achieve diameter $O(n^{1/3})$ but cannot go below diameter $\widetilde{\Omega}(n^{1/4})$. Using $O(m)$ shortcuts one can achieve diameter $O((n^2/m)^{1/3})$ and cannot go below diameter $\Omega(n^{1/8})$.
The gap in the second case above is  particularly big, especially for small $m$. As $m$ becomes closer and closer to $n$, the upper bound exponent becomes closer and closer to $1/3$, whereas the lower bound exponent is only $1/8$. For the $O(n)$-shortcut setting the exponents $1/3$ and $1/4$ are much closer.

Thus a pressing question is:

\begin{center}
{\em Question: Can we reduce the gap between $\Omega(n^{1/8})$ and $O((n^2/m)^{1/3})$ for the diameter using $O(m)$ shortcuts?}
\end{center}

\subsection{Our Results}
The main contribution of our paper
 is a simplification and generalization of the Bodwin-Hoppenworth \cite{BH23} lower bound construction. In particular, our new lower bound construction is fully deterministic and unlike all previous lower bound constructions no longer relies on a result of \cite{BL98} on the number of integer points in a $d$-dimensional ball around the origin. Due to the simplicity of our construction, we are able to generalize it and obtain several new results.

 First, we recover the same lower bound as \cite{BH23} in the $O(n)$-shortcut setting; in fact we additionally remove some log factors.

\begin{restatable}{theorem}{ONRegime}
\label{thm:1/4}
For any $C > 0$, there exist infinitely many  $n$-vertex directed unweighted graph $G = (V,E)$ such that for shortcut set $H$ of size $|H| \le C n$, the graphs $G\cup H$ must have diameter $\Omega(n^{1/4})$.
\end{restatable}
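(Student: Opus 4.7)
The plan is to construct a sparse directed graph $G$ whose vertex set is embedded in a low-dimensional integer lattice, with edges corresponding to a carefully chosen set $\Delta$ of direction vectors. The critical family will consist of $\widetilde{\Omega}(n)$ source-sink pairs $(s_i, t_i)$, each at distance $D = \Omega(n^{1/4})$ from one another via a ``straight'' path that steps only in a single direction $v_i \in \Delta$. The vertex count, size of $\Delta$, and grid side lengths are to be balanced so that the graph has exactly $n$ vertices and $O(n)$ edges, while still admitting $\widetilde{\Omega}(n)$ critical pairs of length $D$.

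A key structural property of $\Delta$ will be general position: no short $\Z$-linear combination of vectors in $\Delta$ equals another. This is used to show (a) the straight path is the unique shortest $s_i$-to-$t_i$ path in $G$ of length $D$, and (b) each single shortcut edge $(u, w)$ added to $G$ can reduce the distance of only $O(1)$ critical pairs. For (b), the shortcut is useful to the pair $(s_i, t_i)$ only if $u$ lies in the reachable downstream of $s_i$, $t_i$ lies in the downstream of $w$, and $d_G(s_i, u) + 1 + d_G(w, t_i) < D$; general position forces these conditions to pin down $(s_i, t_i)$ essentially uniquely given $(u, w)$.

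Given this, the theorem follows from a double count: if $|H| \le Cn$, then $H$ collectively shortens at most $O(n)$ critical pairs, so $\widetilde{\Omega}(n)$ critical pairs remain at distance $\Omega(n^{1/4})$ in $G \cup H$. Previous work of \cite{BH23} realized $\Delta$ as lattice points on a sphere, invoking Bárány--Larman convex geometry (and randomization) to guarantee general position and a large enough $|\Delta|$. The plan is to replace this with a purely arithmetic construction, such as lattice points on a truncated moment curve $(t, t^2, t^3, \ldots)$, whose general-position property follows from Vandermonde-type determinant arguments; this should remove both the randomness and the extraneous log factors in the previous bound.

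The main obstacle is verifying the general-position property with sufficient quantitative strength: any sequence of at most $D$ $\Delta$-steps producing the same net displacement as a length-$D$ straight path must in fact be that straight path, and mixed-direction detours that combine with a single shortcut edge must not be able to beat length $D$. This translates into a concrete statement about the lack of short nontrivial integer relations among moment-curve points, which should follow by elementary means but must be carried out precisely enough to cover all combinations of $\le D$ steps on both sides of a prospective shortcut. The second delicate piece is parameter selection: choosing the dimension, the size of $\Delta$, and the grid side lengths so that $n$ vertices, $O(n)$ edges, and $\widetilde{\Omega}(n)$ critical pairs of length $\Omega(n^{1/4})$ can be simultaneously achieved.
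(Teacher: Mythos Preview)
Your property (b) --- that each single shortcut edge can reduce the distance of only $O(1)$ critical pairs --- cannot be achieved simultaneously with the other target parameters, and this is the central gap. A simple double count shows why: you want $|P| = \widetilde{\Omega}(n)$ critical paths each of length $D = \Omega(n^{1/4})$ inside a graph with $|E| = O(n)$ edges (and hence constant out-degree). The total number of (critical path, length-$2$ subpath) incidences is therefore $\Omega(n \cdot n^{1/4}) = \Omega(n^{5/4})$, while the number of length-$2$ paths in a constant-degree graph is $O(n)$. Thus some length-$2$ subpath $u \to \cdot \to w$ lies on $\Omega(n^{1/4})$ critical paths. Since every critical path is by hypothesis the unique path between its endpoints, the shortcut $(u,w)$ shortens \emph{all} of those $\Omega(n^{1/4})$ pairs, directly contradicting (b). No general-position or moment-curve argument can rescue this, because the obstruction is purely combinatorial. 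This is exactly the barrier that kept the edge-disjoint-path constructions of Hesse and Huang--Pettie at $\Omega(n^{1/6})$; your plan is a variant of that paradigm, not of \cite{BH23}.

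The paper (following \cite{BH23}) gets to $n^{1/4}$ precisely by \emph{abandoning} property (b). Instead of ``$O(1)$ per shortcut,'' the correct structural lemma is: any length-$g$ subpath of $G$ is contained in at most $O(\ell/g)$ critical paths. This is compatible with the incidence count above (indeed it is tight for small $g$), and a potential-function argument --- summing $\dist_{G\cup H}(s_\pi,t_\pi)$ over all $\pi \in P$ --- shows that any single shortcut drops the potential by at most $g \cdot O(\ell/g) = O(\ell)$, independent of $g$. With $|P| = \Theta(n)$ and initial potential $\Theta(n\ell)$, $O(n)$ shortcuts cannot kill it. The paper's construction realizing the $O(\ell/g)$ overlap bound is also quite different from what you describe: it is not a moment curve or a large set $\Delta$ of straight directions, but a $2$-dimensional layered grid in which each layer offers only \emph{two} edge vectors, $(0,0)$ and $(1,q_i)$, with $(q_i)$ the bit-reversal permutation of $\ints{r}$; a critical path with parameter $s$ takes the nonzero vector exactly when $q_i < s$. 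Unique-subset-sum, not general position, is what makes the paths unique, and the bit-reversal ordering is what delivers the graded overlap bound.
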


The construction of Bodwin and Hoppenworth \cite{BH23} gives a diameter lower bound of $\Omega(n^{1/4}/\log^{9/4} n)$. Our construction gets rid off the $\log^{9/4} n$ loss. Similar to \cite{BH23}, our construction also extends to the setting where we are allowed to add $p$ shortcut edges for some other values of $p$. For $p \le n$, our bound is the same as theirs, up to $\tO(1)$ factors; for $p \gg n$, our bound is actually better than theirs, improving from $\widetilde{\Omega}(n^{5/4} / p)$ to $\Omega(n/p^{3/4})$. This was achieved by removing some small non-tightness in their analysis.

\begin{restatable}{theorem}{Tradeoff}
\label{thm:tradeoff}
For any $p \in [1, n^{4/3}]$, there exist infinitely many  $n$-vertex directed unweighted graphs $G = (V,E)$ such that for shortcut set $H$ of size $|H| \le p$, the graph $G\cup H$ must have diameter $\Omega(\frac{n}{p^{3/4}})$. 
\end{restatable}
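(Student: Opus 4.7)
The plan is to prove Theorem~\ref{thm:tradeoff} by parameterising the construction behind Theorem~\ref{thm:1/4} and re-running its covering analysis with explicit $p$-dependence. I expect the construction for Theorem~\ref{thm:1/4} to produce an $n$-vertex graph $G$ equipped with a family of $N$ designated critical source--sink pairs of pre-shortcut distance $\ell$, together with a covering lemma of the form: any single transitive-closure edge can act as a useful shortcut for at most $Q(D)$ of these critical pairs, where $D$ is the target post-shortcut diameter. Given such a setup, any shortcut set $H$ of size $p$ that pushes every critical distance below $D$ must satisfy the double counting $p \cdot Q(D) \ge N$.

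Concretely, I would first identify the free parameters of the construction (most plausibly the critical-path length, the number of critical paths, and the side length of the ambient structured point set), and track how $N$ and $Q(D)$ depend on them. I would then rebalance these parameters as a function of $p$ so that the double-counting inequality rearranges to $D \ge \Omega(n/p^{3/4})$ for the whole range $p \in [1, n^{4/3}]$. The sanity checks are that at $p = \Theta(n)$ this should recover the $\Omega(n^{1/4})$ bound of Theorem~\ref{thm:1/4}, and at $p$ close to $n^{4/3}$ it should degenerate to $\Omega(1)$, which is trivial.

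The main obstacle is obtaining the exponent $p^{3/4}$ rather than the $p^{1}$ that appears in the Bodwin--Hoppenworth tradeoff $\widetilde\Omega(n^{5/4}/p)$. This improvement must come from tightening the covering lemma itself: a crude union bound on how many critical paths a shortcut can help produces the weaker dependence, so one must exploit the geometric alignment between the two shortcut endpoints and the structured collection of critical paths directly, eliminating the small slack that the introduction advertises as present in \cite{BH23}. Since the simplified construction avoids the B{\'a}r{\'a}ny--Larman ingredient of \cite{BH23} and relies instead on an explicit combinatorial point set, I expect this tight incidence counting to be the main technical step, but also cleaner to set up than in the previous work. Finally, the endpoint cases $p = O(1)$ and $p$ near $n^{4/3}$ reduce to trivial diameter bounds and need only be checked for consistency.
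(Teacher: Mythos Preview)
Your plan diverges from the paper's in a way worth flagging. The paper does \emph{not} re-parameterise the base construction and rerun the potential argument with an explicit $p$-dependence. Instead it proves the single case $p = \Theta(n)$ once (Proposition~\ref{prop:1/4}) and then handles all other $p$ by black-box reductions to that case:
\begin{itemize}
\item For $p \le c_1 n$ (proof only sketched in the paper), replace each vertex of a Proposition~\ref{prop:1/4} graph on $N_0 = \Theta(p)$ vertices by a path of length $n/N_0$; this stretches the diameter lower bound from $\Omega(N_0^{1/4})$ to $\Omega(N_0^{1/4} \cdot n/N_0) = \Omega(n/p^{3/4})$ while leaving $p$ unchanged.
\item For $p > c_1 n$, take a Proposition~\ref{prop:1/4} graph on $N = \Theta(p)$ vertices and \emph{contract} every $k = N/n$ consecutive layers into one, producing an $n$-vertex graph $G'$; a diameter-$D$ shortcutting of $G'$ lifts to a diameter-$O(kD)$ shortcutting of $G$, forcing $D = \Omega(N^{1/4}/k) = \Omega(n/p^{3/4})$.
\end{itemize}
Your direct re-parameterisation could plausibly recover the first regime, but for $p \gg n$ there is a genuine gap: the base construction has $|P| = \Theta(|V|)$ critical paths by design, and merely varying the grid side length, path length, or number of directions does not produce $|P| \gg |V|$, which is what a direct potential argument would need in order to absorb $p \gg n$ shortcuts. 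The layer-compression reduction is the missing idea for that regime.

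Your diagnosis of where the improvement from $\widetilde\Omega(n^{5/4}/p)$ to $\Omega(n/p^{3/4})$ comes from is also off. The covering lemma is unchanged in substance from \cite{BH23}: a length-$g$ shortcut lies on $O(\ell/g)$ critical paths and drops the potential by $O(\ell)$, with no finer incidence bound used anywhere. The gain over \cite{BH23} comes entirely from (i) the log-free constant in Proposition~\ref{prop:1/4} and (ii) replacing their randomised path-subsampling reduction by the deterministic layer compression above, which avoids the slack in their subsampling parameters. So there is no ``tighten the geometric incidence count'' step of the kind you anticipate; the entire proof for $p > c_1 n$ is a short contradiction argument wrapped around Proposition~\ref{prop:1/4}.
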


Next, unlike \cite{BH23}, our new construction, coupled with a variant of the potential function argument used by \cite{BH23}, allows us to improve upon the $\Omega(n^{1/8})$ lower bound \cite{Lu2021BetterLB} in the $O(m)$-shortcut setting, addressing the main question from earlier:

\begin{restatable}{theorem}{OMRegime}
\label{thm:1/5}
     For any constant $C>0$, there exist infinitely many $n$-vertex $m$-edge directed unweighted graphs $G = (V,E)$ such that for shortcut set $H$ of size $|H|\le Cm$, the graph $G\cup H$ must have diameter $\Omega(n^{1/5})$.
\end{restatable}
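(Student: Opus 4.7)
My plan is to combine the higher-dimensional generalization of our simplified construction for Theorem~\ref{thm:1/4} with a potential-function argument in the style of Bodwin--Hoppenworth~\cite{BH23}, but tuned to the $O(m)$-shortcut regime where their construction fails for the reason explained in the footnote in the introduction (path subsampling reduces $n$ but not $m$).

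First, I would instantiate our construction in an appropriately chosen dimension $d$, so that the resulting graph has $n$ vertices, $m$ edges, and a family $\mathcal{P}$ of critical $s$--$t$ paths of length $L = \Theta(n^{1/5})$, indexed by ``direction vectors'' in a lattice box satisfying the no-two-directions-sum-to-a-third property that drives our simplification of~\cite{BL98}. Parameters should be balanced so that $(n, m, L)$ sits exactly at the Kogan--Parter threshold $L = \Theta((n^2/m)^{1/3})$, i.e., $m = \Theta(n^{7/5})$; beyond this threshold the theorem follows from Theorem~\ref{thm:1/4} applied to a sparser subgraph.

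Second, I would define a potential $\Phi$ on $G \cup H$ as the number of ``far sub-pairs'' $(x,y)$ that lie on a common path of $\mathcal{P}$ and still satisfy $\dist_{G \cup H}(x,y) \geq D$ for some threshold $D = c\,n^{1/5}$. I would then establish two complementary estimates: (i) if the diameter of $G \cup H$ were below $D$, then every path in $\mathcal{P}$ contributes no far sub-pairs, forcing $\Phi$ to drop by $\Omega(|\mathcal{P}| \cdot L)$ from its initial value; and (ii) the insertion of any single shortcut edge into $H$ decreases $\Phi$ by at most some quantity $T$ controlled by the construction's geometry. Combined, these force $|H| \geq \Omega(|\mathcal{P}| \cdot L / T)$, and plugging in the chosen parameters would give $|H| = \omega(m)$, contradicting the hypothesis $|H| \leq Cm$.

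The main obstacle is step (ii), because in the $O(m)$ regime a single shortcut can serve many critical pairs and the trivial ``one shortcut per critical path'' counting from the $O(n)$ setting is far too weak. I plan to bound $T$ geometrically by showing that a shortcut edge $(u,v)$ with $\dist_G(u,v)$ large can only be aligned with the direction of a single path of $\mathcal{P}$, exploiting the no-two-directions-sum-to-a-third property that our deterministic construction supplies; shortcut edges with $\dist_G(u,v)$ small, on the other hand, contribute only a bounded amount to $\Phi$ per unit of their length, and can be charged against the $m$ edges already in $G$. Calibrating the weights inside $\Phi$ and the dimension $d$ so that both contributions are simultaneously controlled is the technically delicate part of the proof and is what ultimately pins the exponent at $1/5$.
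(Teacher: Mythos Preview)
Your proposal has the parameter regime backwards and misses the key technical device. In the paper's proof the graph is \emph{sparse}: it is the $d=2$ instance of the higher-dimensional construction, with $|E| \le 2|V|$, so $m = \Theta(n)$. Hence an $O(m)$-shortcut set is simply an $O(n)$-shortcut set, and the task is to prove an $\Omega(n^{1/5})$ diameter lower bound against $O(n)$ shortcuts \emph{for this particular graph}. The $d=2$ construction has $\ell = \Theta(n^{1/5})$ layers and a critical-path set $P$ of size $\Theta(n^{6/5})$, and any length-$g$ subpath is contained in at most $O((\ell/g)^2)$ critical paths---not ``a single path'' as you suggest for long shortcuts. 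Your plan to aim for $m = \Theta(n^{7/5})$ does not correspond to any construction in the paper (all the $G_d$ have constant average degree), and the claim that the remaining range follows from Theorem~\ref{thm:1/4} applied to a sparser subgraph fails outright: with $p = Cn^{7/5}$ shortcuts, Theorem~\ref{thm:tradeoff} only yields diameter $\Omega(n/p^{3/4}) = \Omega(n^{-1/20})$, which is vacuous.

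The actual innovation is a refinement of the potential. With the na\"ive potential $\sum_{\pi \in P} \dist_{G \cup H}(s_\pi, t_\pi)$, a length-$g$ shortcut can drop the potential by $O((\ell/g)^2 \cdot g) = O(\ell^2/g)$, which is worst for \emph{small} $g$; since $\Theta(n)$ short shortcuts are allowed, the argument stalls. The paper fixes this by first adding to the graph a set $E'$ of all pairs $(u,v)$ with $\dist_G(u,v) \le c$ (for a constant $c$), and defining $\Phi(H) = \sum_{\pi \in P} \dist_{G \cup E' \cup H}(s_\pi, t_\pi)$. Now a shortcut of length $g \le c$ drops $\Phi$ by zero, while one of length $g > c$ drops it by at most $O((\ell/g)^2 \cdot g/c) \le O(\ell^2/c^2)$, uniformly in $g$. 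With $\Phi(\emptyset) = \Theta(|P|\,\ell/c) = \Theta(n^{7/5}/c)$ and $|H| \le 2Cn$, one gets $\Phi(H) \ge \Theta(n^{7/5}/c) - O(C n^{7/5}/c^2)$, and choosing $c$ a sufficiently large constant (depending on $C$) keeps $\Phi(H)$ at a constant fraction of its initial value. Your vague ``charging short shortcuts against the $m$ edges already in $G$'' does not capture this mechanism: nothing is charged to the original edges; rather, the potential itself is coarsened so that short shortcuts become worthless.
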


In many works on spanners, emulators, distance preservers, hopsets and so on (e.g. \cite{CE,Kavitha17,KavithaV15,CyganGK13,KoganP22}), one often considers restrictions to the objects of study, where instead of addressing all pairs of vertices, only certain pairs are of interest. The two common settings are (a) the {\em pairwise} setting in which one is given a set $P\subseteq V\times V$ and one only cares about the paths or distances between node pairs in $P$, and (b) the {\em sourcewise} setting in which one is given $S\subseteq V$ and one only cares about pairs in $S\times V$.

Motivated by this, we consider the {\em sourcewise} setting for shortcut sets: given a graph $G=(V,E)$, a set $S\subseteq V$, if we add a set of edges $H$ from the transitive closure of $G$, how much can we decrease the sourcewise diameter of $G$, $\max_{(s, v) \in S \times V, \dist(s, v) < \infty} \dist(s,v)$.

In particular, we show that using $O(n)$ shortcuts, the sourcewise diameter can be reduced to $\tO(|S|^{1/2})$, following the method of Kogan and Parter \cite{koganparter}.
Additionally, for any $d \ge 2$, and $|S| = \widetilde{\Theta}(n^{3/(d+3)})$, the sourcewise diameter is $\widetilde{\Omega}(|S|^{1/3})$ by adding $O(n)$ or $O(m)$ shortcuts (via a randomized construction).

\section{Technical Overview}

\subsection{Previous Works}

Before giving an overview of our construction, we first summarize previous approaches to proving lower bounds for the shortcut set problem using $O(n)$ shortcuts. Prior to the work of Bodwin and Hoppenworth~\cite{BH23}, lower bound constructions for the $O(n)$-shortcut set problem including \cite{Hesse03} and \cite{Huang2018LowerBO} are all based on constructing a directed graph $G$ on $n$-vertices with a set of \emph{critical paths} $\Pi$ such that $|\Pi| = C\cdot n$ where $C$ can be set as an arbitrarily large constant. The critical path set contains paths of  length $\ell$ that are unique paths between their endpoints, and are \emph{pairwise edge-disjoint}. Then one would need to add at least one shortcut for each critical path in order to decrease the diameter to $o(\ell)$. Using this approach, Hesse \cite{Hesse03} obtained a lower bound of $\Omega(n^{1/17})$, and Huang and Pettie \cite{Huang2018LowerBO} obtained a lower bound of $\Omega(n^{1/6})$.

Recently, Bodwin and Hoppenworth \cite{BH23} improved the lower bound to $\widetilde{\Omega}(n^{1/4})$ by relaxing the property where paths in $\Pi$ are pairwise edge-disjoint to allow pairs of paths to overlap on \emph{polynomially many} edges. 
In order to better illustrate our improvements, we give  more details of their construction. The construction falls under the same geometric construction framework used in earlier works (e.g. \cite{Hesse03,CE, Lu2021BetterLB, BHspanner22}) for constructing lower bounds for shortcut sets and related problems. In this framework, every vertex of the graph corresponds to a lattice point in $\Z^d$ and every edge corresponds to a $d$-dimensional vector. The construction of the set of vectors corresponding to the edges crucially relies on building a \emph{convex set} $B_d(r)$ of vectors of length $O(r)$, taken as the vertices of the convex hull of the set of integer points in a ball of radius $r$ centered at the origin.  It is known from~\cite{BL98} that $|B_d(r)| = \Theta\lpr{r^{d\cdot \frac{d-1}{d+1}}}$ and $B_d(r)$ has the important property of being \emph{convex}, namely for distinct $\vec{b}_1, \vec{b}_2\in B_d(r)$, we have $\ang{\vec{b}_1,\vec{b}_2} < \ang{\vec{b}_1,\vec{b}_1}$.

The construction in \cite{BH23} builds on the framework for $d = 2$  with some  modifications that enabled them to allow polynomial overlaps between the paths in $\Pi$. We now give a high-level sketch of their construction: for sufficiently large positive integer $r > 0$, take the  convex set $B:= B_2(r^3)$ with size $|B| = \Theta(r^2)$ (but they only keep points in the first quadrant). Let $B = \{\vec{b}_1,\dots, \vec{b}_{|B|}\}$ where the vectors are ordered counterclockwise. The graph $G = (V,E)$ on $n$ vertices with critical paths set $\Pi$ is defined as follows.
\begin{itemize}
    \item[-] \textbf{Vertex Set.} $G$ is a layered graph with $\ell = \widetilde{\Theta}(r^2)$ layers and each layer corresponds to the grid $[\widetilde{\Theta}(r^3)]\times [\widetilde{\Theta}(r^3)]$.\footnote{For a nonnegative integer $N$, $[N]$ denotes $\{1, \ldots, N\}$} A vertex $v$ in layer $i\in [\ell]$ is denoted as $(i,x_1,x_2)$ where the first coordinate specifies that $v$ is in layer $i$ and the last two coordinates specifies that $v$ corresponds to grid point $(x_1, x_2)$. 
    \item[-] \textbf{Edge Set.} Define the set of edge vectors using $B$ as 
    \[W := \{\vec{w}_i :=  \vec{b}_{i+1} - \vec{b}_i \mid i\in [|B|-1]\}.\]
    All the edges in $G$ are between adjacent layers. For $i\in [\ell]$, define $E_i = \{\vec{w}_{\lambda_i}, (0,0)\}$ where $\vec{w}_{\lambda_i}$ is uniformly sampled from the set $W$ with $\lambda_i\in [|B|-1]$ as the set of edge vectors between layer $i$ and layer $i+1$. For a vertex $(i,x_1,x_2)$ in layer $i$,  connect it to the vertex $(i+1,x_1+e_1,x_2+e_2)$ for all $(e_1,e_2)\in E_i$ if $(x_1+e_1,x_2+e_2)$ stays inside the grid. 

    \item[-] \textbf{Critical Paths $\Pi$.} Define the set of directional vectors $D := B$ where $D = \{\vec{d}_1,\dots, \vec{d}_{|B|}\}$ and $\vec{d}_i := \vec{b}_i$. Let  the starting zone $S$ for the paths in $\Pi$ be the middle square whose side length is $1/3$ of the grid length. For each vertex $(1,x_1,x_2)$ in the first layer where $(x_1,x_2)\in S$ and for each directional vector $\vec{d}\in D$, they uniquely define a critical path $\pi$ in $\Pi$ as follows: $\pi$ starts from the vertex $(1,x_1,x_2)$ in layer $1$ and in layer $i$, it takes the edge corresponding to the vector $\arg\max_{\vec{e}\in E_i} \ang{\vec{e}, \vec{d}}$ (it can be shown that the choice is unique). Note that if the path falls off the grid, then the path terminates.
\end{itemize}

The resulting graph satisfies the following properties:
\begin{enumerate}
    \item $|V| = \ell\cdot \widetilde{\Theta}(r^3)\cdot \widetilde{\Theta}(r^3) = \widetilde{\Theta}(r^8)$, $|E| \le 2|V|$, $|\Pi| = \widetilde{\Theta}(n)$.
    \item $G$ has  $\ell = \widetilde{\Theta}(r^2) = \widetilde{\Theta}\lpr{n^{1/4}}$ layers.
    \item Each path in $\Pi$ is the unique path between its endpoints, and has length $\Theta(\ell)$ with constant probability. 
    \item  With high probability, for any path $\sigma: u\leadsto v$ of length $g$ in $G$, there are at most $\ell / g$ paths in $\Pi$ containing $\sigma$ as a subpath.
\end{enumerate}
Using a potential argument, in which the potential is defined as the total distances between the two endpoints of the paths in $\Pi$, this construction can give a lower bound of $\widetilde{\Omega}(n^{1/4})$ for shortcut sets of size $\le c_0\cdot \frac{n}{\log n}$ for some fixed constant $c_0$.  Using the so-called path subsampling argument, they extended the $\widetilde{\Omega}(n^{1/4})$ lower bound to $O(n)$-sized shortcut sets, which is a substantial improvement compared to the previous best lower bound of $\Omega(n^{1/6})$ given in \cite{Huang2018LowerBO}.

In the following, we discuss two limitations of \cite{BH23}.

\paragraph{The $O(m)$-shortcut setting.}

As mentioned earliler, aside from the $O(n)$-shortcut set problem, another intensely studied setting is the $O(m)$-shortcut set problem where $m$ is the number of edges in the input graph. An immediate question following from the construction in \cite{BH23} is whether this construction can give an improvement for the $O(m)$-shortcut setting. In particular, notice that in their construction, the graph has constant, and hence $m = \Theta(n)$, so one may think that in this case a shortcut set of size $O(n)$ is also a shortcut set of size $O(m)$, and thus their construction should directly imply an $\Omega(n^{1/4})$ lower bound in the $O(m)$-shortcut setting. However, this is not the case. 

First we highlight the fact that the construction in \cite{Huang2018LowerBO} has $|\Pi| = \Theta(n)$ where the constant can be set  arbitrarily large, but in \cite{BH23} the size of $\Pi$ is $c_0 \cdot \frac{n}{\log n}$ for some fixed constant $c_0$ that cannot be set  arbitrarily large. This means that the construction in \cite{BH23} given above would only give a lower bound for shortcut sets of size $\le c_0\cdot \frac{n}{\log n}$, so in order to extend the lower bound to shortcut sets of size $C\cdot n$ for arbitrarily large $C$, Bodwin and Hoppenworth \cite{BH23} use the so-called path subsampling argument: they
consider a smaller graph $G'$ obtained from $G$ by restricting to a random subset of vertices $V' \subseteq V$ and ``merging'' paths between these vertices with short lengths. Namely, they added the edges between all $u,v\in V'$ such that the distance between $u,v$ is at most $O(\log n)$. Thus $G'$ has fewer vertices compared to $G$ while having roughly the same diameter up to $\log n$ factors. A shortcut $H$ on $G'$ can also be viewed as a shortcut on $G$ as  $V(G')$ is a subset of $V(G)$. Furthermore, $n' := |V(G')| < n$, so when $|H| = c_0\cdot \frac{n}{\log n}$ for some, $|H|$ can be $C \cdot n'$ for arbitrarily large $C$, if $n'$ (which can be adjusted by the parameters in the subsampling argument) is small enough. 

Due to the nature of the operation described above, only the number of vertices in the graph is decreased, and the number of edges does not necessarily decrease.  This is the underlying barrier in obtaining lower bounds for $O(m)$-shortcut sets using the technique in \cite{BH23}.

\paragraph{Generalization to higher dimensions}

A natural follow-up question on the construction given in \cite{BH23} is whether it can be generalized to higher dimensions and give better bounds. Previous constructions in~\cite{Huang2018LowerBO, Lu2021BetterLB} can both be easily generalized to higher dimensions and one can optimize the dimension to obtain the best bounds. However, it is unclear how the construction in \cite{BH23} can be generalized to higher dimensions for the following reason: the set of edge vectors $W$ is taken to be the set of differences between adjacent vectors in $B$ ordered counterclockwise. Although it is easy to define the notion of ``adjacent vectors'' in $2$ dimensions, it is unclear how to define  an analogous notion in, say, $3$ dimensions.

\subsection{Our Construction}

The key observation that led to our construction is the following: notice that in the construction of the critical paths in \cite{BH23}, each path has two choices at each layer, which is to take the edge corresponding to the zero vector $(0,0)$ or to take the edge corresponding to the nonzero vector $\vec{w}_{\lambda_i}$. Thus, a critical path in fact corresponds to a subset of the set of edge vectors $W$, and the difference between the end point and start point of the path corresponds to sum of the vectors in this subset. The use of the  convex set $B$ in the construction of the set of edge vectors $W$ ensures that critical paths are well-defined in the sense that it has a well-defined choice between the two vectors in $E_i = \{\vec{w}_{\lambda_i}, (0,0)\}$ and that the sum of vectors taken in a critical path is the sum of a subset of $W$ whose sum is unique among all subsets, since the subset uniquely maximizes the inner product with the corresponding directional vector. Intuitively, this means that we only  need a set of edge vectors $W$ that have  lots of ``unique subset sums''.

There is in fact a very simple set that satisfies the above described property, which is the set $W := \{1\}\times [r] = \{(1,s)\mid 1\le s\le r\}$ for some integer $r$. 
Between layer $i$ and layer $i+1$, we put the set of edge vectors as $E_i = \{\vec{w}, (0,0)\}$ for some $\vec{w}\in W$ and each vector $\vec{w}\in W$ is assigned to a unique $E_i$, so we have $\ell = r+1$. Then we define the critical paths as the paths that take the nonzero edge vectors of the form $\{1\}\times [t]$ for $t\in [r]$ starting in some starting region in the first layer. Notice that  given a starting vertex $(0, x_1,x_2)$, a critical path taking the vectors $\{1\}\times [t]$ will end at the vertex $(r, x_1 + t, x_2 + \sum_{j = 1}^{t} j)$. It is easy to see that the start vertex and end vertex uniquely determine a path since the second coordinate indicates the number $t$ of nonzero vectors taken and the third coordinate indicates that the sum of these $t$ distinct numbers is $\sum_{j = 1}^{t} j$. The unique subset of $[r]$ satisfying these conditions is  $[t]$. This construction completely removes the use of the result from \cite{BL98} that bounds the number of vertices of the convex hull over the integer points contained in a $d$-dimensional ball of radius $r > 0$. In addition, we  eliminates the use of randomness in \cite{BH23} and the new construction is easier to analyze. 
We further remark that our simplified construction also provides a logarithmic improvement for the lower bound of $O(n)$-shortcut set over \cite{BH23}. 

In addition, this construction can be  easily generalized to higher dimensions. For instance, when each grid is $3$-dimensional, we take the set of edge vectors as $W:= \{1\}\times \{0\} \times [r]\cup \{0\}\times \{1\} \times [r]$ and let the critical paths correspond to subsets of the form $\{1\}\times \{0\}\times [s] \cup \{0\}\times \{1\}\times [s']$ for $s,s'\in [r]$. It is not hard to see that given the coordinates of the start vertex and the end vertex of a critical path, there is a unique path between the endpoints. Using such generalizations, we are able to obtain improvements in other settings such as the $O(m)$-shortcut set setting.

In our construction for the $O(m)$-shortcut set setting, we also use a refinement of the potential function used in \cite{BH23}. In their work, the potential function is defined as the sum of distances in the graph $G \cup H$ between the endpoints over all critical paths\footnote{We slightly abuse notation and use $G \cup H$ to denote $(V(G), E(G) \cup H)$. }, where $H$ is the added shortcut set. Our potential function is similar, but the distances are the distances in $G \cup E' \cup H$, where $E'$ consists of  pairs $(u, v)$ where the distance from $u$ to $v$ in $G$ is small. In our construction where each grid is $3$-dimensional, adding a shortcut between two vertices with smaller distance will drop the potential function defined in \cite{BH23} more. Thus, intuitively, adding $E'$ to our graph $G$ will make the potential drop  be $0$ for adding a shortcut between two vertices with very small distance. As a result, we can add more shortcuts before our potential function drops below a constant fraction of its initial value.

We will discuss our constructions in more detail in \cref{sec:2d} and \cref{sec:d-dim}.
\section{Preliminaries}

For nonnegative integer $N$, we use $[N]$ to denote $\{1, \ldots, N\}$, and $\ints{N}$ to denote $\{0, \ldots, N-1\}$.

We use $\log$ in this paper to denote log base $2$. 

For a graph $G=(V, E)$ and vertices $u, v \in V$, we use $\dist_G(u, v)$ to denote the distance from $u$ to $v$ in $G$. The subscript $G$ might be dropped if it is clear from context.  
For a path $\pi$, we use $s_\pi$ to denote its start point and $t_\pi$ to denote its end point.

\section{Base Construction}\label{sec:2d}

In this section, we give a simplified deterministic construction of a graph that gives an $\Omega(n^{1/4})$ lower bound in the $O(n)$-shortcut regime, matching the result in \cite{BH23}. 

\subsection{Construction of \texorpdfstring{$G = (V, E)$}{G = (V, E)} with Path Set \texorpdfstring{$P$}{P}}

First, we describe our graph construction. Let $r$ be an integer  and we may assume without loss of generality that $r$ is a power of $2$.

\paragraph{Vertex set.} 
The graph has $\ell = r+1$ layers indexed from $0$ to $r$ and each layer is the grid $\ints{2r} \times \ints{2r^2}$. In every layer, each vertex corresponds to a point $(x_1,x_2)\in \ints{2r}\times \ints{2r^2}$. We denote a vertex $v$ in layer $i$ as $(i, x_1,x_2)$ where the first coordinate specifies that $v$ is in layer $i$, and the last two coordinates $(x_1,x_2)$ specifies the grid point $v$ corresponds to.

\paragraph{Edge set.} 
Consider the set of integer vectors $W = \{1\}\times \ints{r}$. We only add edges between adjacent layers as follows: between layer $i$ and layer $i+1$ for $i = 0,\dots,r-1$, define the edge vector set $E_i = \{(w_1,w_2), (0,0)\}$ for some $(w_1,w_2)\in W$. (We will specify how to pick these $(w_1,w_2)\in W$ in details later.) For every vertex $(i, x_1,x_2)$, we add an edge connecting to the vertex $(i+1, x_1 + e_1, x_2+e_2)$ for every $(e_1,e_2)\in E_i$. Note that we will not add an edge if $(x_1+e_1, x_2+e_2)\notin \ints{2r}\times \ints{2r^2}$.

Now we describe how to assign the nonzero edge vector $(w_1,w_2)\in W$ to edge vector set $E_i$ for each adjacent layer $i\in \ints{r}$.  Let $q$ be a permutation on $\ints{r}$ such that the $(\log r)$-bit binary representation of $q_i$ is  exactly the binary representation of $i$ reversed (for simplicity, we use sets of integers indexed from $0$, since integers in $\ints{r}$ can all be represented using $(\log r)$ bits). See \cref{tab:p_example} for an example. Then $E_i = \{(1, q_i), (0, 0)\}$.

\begin{table}[ht]
    \centering
    \begin{tabular}{cccccccccccccccccc}
    \hline
    $i$ &  0 & 1 & 2 & 3 & 4 & 5 & 6 & 7 & 8 & 9 & 10 & 11 & 12 & 13 & 14 & 15\\
    \hline
    $q_i$ & 0 & 8 & 4 & 12 & 2 & 10 & 6 & 14 & 1 & 9 & 5 & 13 & 3 & 11 & 7 & 15\\
    \hline
    \end{tabular}
    \caption{Example of the permutation $q$ for $r=16$. }
    \label{tab:p_example}
\end{table}

\paragraph{Critical paths $P$.}

For any $(x_1, x_2) \in \ints{r} \times \ints{r^2}$ and for any $s \in [r]$, we define a critical path as follows. The starting point of the critical path is $(0, x_1, x_2)$. For any layer $i$, the critical path uses the edge corresponding to $(1, q_i)$ if $q_i < s$; otherwise, it uses the edge corresponding to $(0, 0)$. In particular, the set of nonzero vectors taken by a critical path is of the form $\{1\}\times \ints{s}$ for $s\in [r]$.

\paragraph{Properties of $G$.}

By definition of our construction, the graph $G = (V, E)$ with the critical path set $P$ has 
\begin{align*}
    |V| &= \ell\times (2r)(2r^2) = 4r^3(r+1),\\
    |E| &\le 2(\ell-1)(2r)(2r^2) = 8r^4, \\
    |P| &= r\cdot r^2 \cdot |D| = r^4.
\end{align*}

Next we show that all critical paths are  unique paths between their endpoints and they have relatively small pairwise overlaps.

\begin{lemma}\label{lem:2d-unique-path}
    For every critical path $\pi\in P$, $\pi$ is the unique path between its endpoints in $G$.
\end{lemma}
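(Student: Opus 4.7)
The plan is to reduce the uniqueness claim to a statement about subset sums in $\ints{r}$. Fix a critical path $\pi$ starting at $(0, x_1, x_2)$ and associated with parameter $s \in [r]$. The first step is to compute its endpoint explicitly: because $q$ is a permutation of $\ints{r}$, the set of layers at which $\pi$ uses the nonzero edge is exactly $I_\pi := q^{-1}(\ints{s})$, so $|I_\pi| = s$ and the sum of second coordinates of the chosen nonzero vectors is $\sum_{i \in I_\pi} q_i = 0 + 1 + \cdots + (s-1) = s(s-1)/2$. Thus $t_\pi = (r,\, x_1 + s,\, x_2 + s(s-1)/2)$.

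Next I would consider an arbitrary path $\pi'$ in $G$ from $s_\pi$ to $t_\pi$. Since edges only go between consecutive layers, $\pi'$ must take exactly one edge between layer $i$ and layer $i+1$ for each $i \in \ints{r}$, and each such edge is either $(0,0)$ or $(1, q_i)$. Let $I \subseteq \ints{r}$ be the set of layer indices where $\pi'$ picks the nonzero edge. Then the total displacement of $\pi'$ is $\bigl(|I|,\, \sum_{i \in I} q_i\bigr)$, so matching $t_\pi$ forces the two equations $|I| = s$ and $\sum_{i \in I} q_i = s(s-1)/2$.

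The crux of the proof is then a minimum-sum argument: since $q$ is a bijection onto $\ints{r}$, the multiset $\{q_i : i \in I\}$ is an $s$-element subset of $\ints{r}$, and the minimum possible sum of $s$ distinct elements in $\ints{r}$ is $0 + 1 + \cdots + (s-1) = s(s-1)/2$, attained uniquely by $\{0, 1, \ldots, s-1\} = \ints{s}$. Hence $\{q_i : i \in I\} = \ints{s}$, and using that $q$ is a bijection we obtain $I = q^{-1}(\ints{s}) = I_\pi$. This determines $\pi'$ layer by layer, yielding $\pi' = \pi$.

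I do not expect a serious obstacle; the argument is essentially the observation that for a permutation on $\ints{r}$ the subset sum $s(s-1)/2$ on an $s$-subset forces the subset to be $\ints{s}$. The only subtlety worth double-checking is that $\pi'$ never falls off the grid during its traversal, but since it reaches $t_\pi \in \ints{2r} \times \ints{2r^2}$ and all intermediate coordinates are bounded above by $r$ and $r(r-1)/2$ respectively, this is automatic from the grid size $\ints{2r} \times \ints{2r^2}$. Note that the bit-reversal structure of $q$ is not needed for uniqueness — any permutation of $\ints{r}$ would suffice — so this property will be used elsewhere (presumably to bound pairwise overlaps between critical paths).
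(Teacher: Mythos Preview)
Your proof is correct and follows essentially the same approach as the paper: compute the endpoint, observe that any path $\pi'$ must pick $s$ nonzero vectors whose second coordinates sum to $s(s-1)/2$, and conclude via the minimum-sum argument that this forces the set $\ints{s}$. Your version is a touch more explicit in tracking the index set $I$ via the permutation $q$, and your side remarks (the grid check, and that any permutation would work for this lemma) are accurate but not needed for the argument.
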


\begin{proof}
    Let $v = (1, x_1,x_2)$ be the start vertex of $\pi$ and suppose the subset of nonzero edge vectors used by $\pi$ is exactly the set $\{1\}\times \ints{s}$ where $s\in [r]$. Then we can compute the endpoint of $\pi$ as $u := (r+1, y_1,y_2)$ where $y_1 =x_1+s, y_2 = x_2 + \sum_{i = 0}^{s-1} = x_2 + s(s-1)/2$. 

    Suppose for contradiction that there exists another path $\pi'\ne \pi$ from $v$ to $u$ in $G$. Let $S$ denote the set of nonzero edge vectors in $\pi'$, then we know that since $\pi'\ne \pi$ and each adjacent layer corresponds to a unique nonzero edge vector, we must have $S\ne \{1\}\times \ints{s}$. Since all the nonzero edge vectors have the first coordinate equal to $1$ and we know the endpoints of $\pi'$ correspond to the grid point $(x_1,x_2)$ and $(x_1+s, x_2+s(s-1)/2)$, we know that $\pi'$ must take exactly $s$ nonzero edge vectors. Thus we must have $|S| = s$. Then since the second coordinates of all the edge vectors are exactly the set $\ints{r}$, in order for $\pi'$ to reach $u$, the sum of the second coordinates of the $s$ vectors in $S$ must be $s(s-1)/2$. It is easy to see that the only solution is to take the values $0,\dots, s-1$ for the second coordinates of the vectors in $S$ and therefore $S = \{1\}\times \ints{s}$. This gives a contradiction to the assumption that $\pi'\ne \pi$.
\end{proof}

\begin{lemma}\label{lem:same-d-vertex-disjoint}
    For any two critical paths $\pi_1,\pi_2\in P$ corresponding to the same subset of nonzero edge vectors $\{1\} \times \ints{s}$ (and thus having distinct starting vertices), we have that $\pi_1$ and $\pi_2$ are vertex-disjoint.
\end{lemma}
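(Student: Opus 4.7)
The plan is to argue that because the subset of nonzero edge vectors used by a critical path is determined entirely by the parameter $s$ (and \emph{not} by the starting vertex), two critical paths sharing the same $s$ make the same choice at every layer and therefore travel in parallel, remaining offset by the displacement between their start points.

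More concretely, I would proceed as follows. Let $\pi_1$ have starting vertex $(0, x_1, x_2)$ and let $\pi_2$ have starting vertex $(0, x_1', x_2')$, with $(x_1, x_2) \neq (x_1', x_2')$. Since both paths are defined with the same value of $s$, the rule ``use $(1, q_i)$ if $q_i < s$, and $(0,0)$ otherwise'' specifies the edge taken by each path at layer $i$ purely as a function of $i$ and $s$. Thus at every layer $i$, $\pi_1$ and $\pi_2$ traverse edges with identical vectors. Consequently, the accumulated displacement from the respective starting vertex after $i$ steps is some common vector $(a_i, b_i) \in \ZZ^2$ for both paths.

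It follows that the vertex of $\pi_1$ in layer $i$ is $(i,\, x_1 + a_i,\, x_2 + b_i)$ while the vertex of $\pi_2$ in layer $i$ is $(i,\, x_1' + a_i,\, x_2' + b_i)$. If these two vertices coincided at some layer $i$, one would obtain $(x_1, x_2) = (x_1', x_2')$, contradicting the assumption that the starting vertices are distinct. Hence $\pi_1$ and $\pi_2$ share no vertex, which is the desired vertex-disjointness.

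There is essentially no serious obstacle here; the entire argument rests on the observation that the choice of edge at each layer depends only on $s$ and $i$, not on the path's current position, so critical paths with the same $s$ behave like rigid translates of one another. The only care needed is to make explicit that a vertex is specified by its layer together with its grid coordinates, so that coincidence of vertices forces coincidence of the grid offsets from the starting positions.
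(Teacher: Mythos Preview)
Your proof is correct and follows exactly the same idea as the paper's: since the edge choice at each layer depends only on $s$ and $i$, the two paths pick the same edge vector at every layer and hence remain separated by their initial offset. The paper states this in a single sentence, and your version simply spells out the displacement argument more explicitly.
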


\begin{proof}
As $\pi_1,\pi_2$ have different start point, and at each layer, they will pick the same edge vector,  they will stay different at each layer. 
\end{proof}

\begin{lemma}\label{lem:2d-overlap}
    Given any path $\sigma: u \leadsto w$ of length $g$ in $G$ from layer $a$ to layer $b$ where $a,b\in \ints{\ell}$ and $b-a = g$, there are at most $8r / g$ critical paths containing $\sigma$.
\end{lemma}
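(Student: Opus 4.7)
My plan is to show that, given $\sigma$, each critical path containing $\sigma$ is essentially determined by a single threshold $s \in [r]$, and then to bound the number of valid $s$ by exploiting a structural property of the bit-reversal permutation $q$. Recall that a critical path $\pi \in P$ is parameterized by its starting vertex $(x_1, x_2) \in \ints{r} \times \ints{r^2}$ and threshold $s \in [r]$: $\pi$ takes the nonzero edge at layer $i$ exactly when $q_i < s$. For $\pi$ to contain $\sigma$, the edge vector used by $\pi$ and $\sigma$ must agree on every layer $i \in \{a, \ldots, b-1\}$, and $\pi$ must pass through $u$ at layer $a$. Since $E_i = \{(1, q_i), (0, 0)\}$ is translation-invariant, the first condition depends only on $s$; given $s$, the second condition forces $(x_1, x_2)$ to be the unique vertex obtained by running $\pi$ backward from $u$ along the edges dictated by $s$. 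Thus it suffices to count valid thresholds $s \in [r]$.

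Let $T = \{q_i : a \le i < b\}$ and sort its $g$ elements as $v_1 < v_2 < \ldots < v_g$. The set of layers in $\{a, \ldots, b-1\}$ where $\sigma$ uses the nonzero edge corresponds to a fixed subset $T'_\sigma \subseteq T$, and the matching condition becomes $T'_\sigma = \{v \in T : v < s\}$. This forces $T'_\sigma$ to be a prefix $\{v_1, \ldots, v_k\}$ for some $k$ determined by $\sigma$, and pins $s$ to the interval $(v_k, v_{k+1}]$, with the conventions $v_0 := 0$ and $v_{g+1} := r$. So the number of valid $s$ is at most one of the gaps in the augmented sorted sequence $0 = v_0 < v_1 < \ldots < v_g < v_{g+1} = r$, and the proof reduces to bounding every such gap by $8r/g$.

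To bound the gaps, I would use the following property of the bit-reversal permutation $q$ on $\ints{r}$ with $r = 2^n$: $q$ is an involution, and for every $0 \le j \le n$ and every multiple $c$ of $2^j$, $q([c, c + 2^j))$ is an arithmetic progression in $\ints{r}$ with common difference $r / 2^j$ (a short bit-level computation from the definition of $q$). Set $k = \lfloor \log g \rfloor$, so $2^k \le g$, and partition $\ints{r}$ into $2^k$ aligned blocks $K_c = [c \cdot r/2^k, (c+1) \cdot r/2^k)$ of length $r/2^k$. Applying the AP property to each $K_c$, $q(K_c)$ is an AP in $\ints{r}$ with common difference $2^k$; since $[a, b)$ has length $g \ge 2^k$ it intersects every such AP, and the involution property gives $T \cap K_c = q([a, b)) \cap K_c \ne \emptyset$ for every $c$. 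So $T$ hits every one of the $2^k$ blocks, implying that each consecutive gap inside $T$ is at most $2 \cdot r/2^k \le 4r/g$, and the boundary gaps $v_1 - 0$ and $r - v_g$ are each at most $r/2^k \le 2r/g$ (since $T$ meets $K_0$ and $K_{2^k - 1}$). Altogether the maximum gap is at most $4r/g \le 8r/g$, finishing the proof.

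The main obstacle is the gap bound in the third paragraph: the first two paragraphs amount to bookkeeping, but leveraging the arithmetic-progression structure of $q$ on aligned intervals is the crucial ingredient, and it is precisely where the choice of $q$ as a bit-reversal pays off; had we instead assigned the nonzero vectors to layers in natural order (identity permutation), long runs of small-$q_i$ values would create intervals with large gaps and the bound would fail.
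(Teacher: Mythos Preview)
Your proof is correct and follows essentially the same approach as the paper: both reduce to showing that the set $T=\{q_i : a\le i<b\}$ is $O(r/g)$-dense in $\ints{r}$, using the fact that bit-reversal sends aligned dyadic intervals to arithmetic progressions. The paper finds an aligned sub-interval of $[a,b)$ whose $q$-image is a coarse AP contained in $T$ (and then bounds $|s'-s|$), whereas you dually partition $\ints{r}$ into blocks and use the involution to show $T$ hits each block; these are equivalent applications of the same structural fact, and your explicit reduction to a single gap $(v_k,v_{k+1}]$ is a slightly cleaner framing.
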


\begin{proof}
    Let $k$ be the smallest integer such that $g\ge 2\cdot \frac{r}{2^k}$, so that $\{a, \ldots, b-1\}$ contains an interval $\{i \cdot \frac{r}{2^k}, i \cdot \frac{r}{2^k}+1, \ldots, i \cdot \frac{r}{2^k} + \frac{r}{2^k}-1\}$ for some $i \in \ints{2^k}$. Note that by definition of the permutation $q$, $$\{q_{i \cdot \frac{r}{2^k} }, \ldots, q_{i \cdot \frac{r}{2^k} + \frac{r}{2^k} - 1}\} = \{i', i' + 2^k, \ldots, i' + r - 2^k\}, $$
    where the $k$-bit binary representation of $i'$ is the $k$-bit binary representation of $i$ reversed. 
    
    Consider the set of nonzero edge vectors $S = \bigcup_{i = a}^{b-1} E_i \setminus \{(0,0)\}$. By the above discussion, $S$ must contain  edge vectors $\{(1,i'+ t \cdot 2^k) \mid t \in \ints{r/2^k}\}$.

    Suppose the path $\sigma$ is used by a critical path $\pi$ with initial point $(0, x_1, x_2)$ and with the set of edge vectors $\{1\} \times \ints{s}$. Now we want to bound the number of critical paths $\pi'\in P$ that can potentially contain $\sigma$ as a subpath. By \cref{lem:same-d-vertex-disjoint}, it suffices to count the number of $s' \in \ints{r}$ such that some critical path $\pi'$ passing through the vertex $u$ in layer $a$ associated with the set of nonzero edge vectors $\{1\} \times \ints{s'}$ can contain $\sigma$ as a subpath.
    
    We show that $\pi'$ cannot contain $\sigma$ as a subpath if   $|s' - s| > 2^k$. Notice that if $|s' - s| > 2^k$, then $S$ must contain an edge vector $w\in W$ that is in the symmetric difference of $\{1\}\times \ints{s}$ and $\{1\}\times \ints{s'}$ since $S$ contains the set $\{(1,i'+ t \cdot 2^k) \mid t \in \ints{r/2^k}\}$. This means that from layer $a$ to layer $b$, exactly one of $\pi$ and $\pi'$ takes the edge vector $w$, so $\pi$ and $\pi'$ must have taken different paths from layer $a$ to layer $b$.

    It follows that the number of critical paths containing $\sigma$ as a subpath is at most the number of $s'$ such that $|s' - s| \le 2^k$. By definition of $k$, we have $k = \ceil{\log \frac{2r}{g}} \le 1+ \log \frac{2r}{g}$. So $2^k \le 4r/g$ and thus the number of critical paths containing $\sigma$ as a subpath is at most $8r/g$ as desired.
\end{proof}

We summarize the properties of the graph constructed above as follows.

\begin{prop}\label{prop:2d-construction}
There exists a constant $c_0 > 0$, so that there exists infinitely many $n$-vertex directed unweighted graphs $G = (V, E)$ and path sets $P$ of size $|P| = c_0 n$ with the following properties:
    \begin{enumerate}
        \item \label{item:2d-construction-item1} $G$ has $\ell = \Theta(n^{1/4})$ layers.
        \item Each path in $P$ starts in layer $0$ and ends in layer $\ell-1$.
        \item Each path in $P$ is the unique path between its endpoints in $G$.
        \item \label{item:2d-construction-item4} For any path $\sigma$ of length $g$, there are at most $8\ell / g$ paths in $P$ that contain $\sigma$ as a subpath.
    \end{enumerate}
\end{prop}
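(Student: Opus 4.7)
The plan is to verify the proposition by instantiating the construction of \cref{sec:2d} for each integer $r$ that is a power of $2$, so that we obtain infinitely many graphs, and then to read off each of the four properties from the definitions together with the three lemmas already established.

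First, I would compute the basic parameters. The vertex set has size $n = |V| = 4r^3(r+1) = \Theta(r^4)$, and the path set has size $|P| = r \cdot r^2 \cdot r = r^4$, where the three factors come from the two starting coordinates and the choice of the parameter $s \in [r]$ defining the critical path. Since $r^4 / (4r^3(r+1)) = r/(4(r+1)) \to 1/4$ as $r \to \infty$, the ratio $|P|/n$ is bounded below by some constant $c_0 > 0$ (e.g. $c_0 = 1/5$ for all sufficiently large $r$), giving the claim $|P| = c_0 n$. Inverting $n = \Theta(r^4)$ yields $r = \Theta(n^{1/4})$, and since $\ell = r+1$, property (1) follows.

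For property (2), each critical path is defined to start at a vertex $(0, x_1, x_2)$ in layer $0$ and to traverse one edge between each pair of consecutive layers (either the nonzero vector or the zero vector); therefore it ends in layer $r = \ell - 1$, as required. Property (3) is exactly the content of \cref{lem:2d-unique-path}. For property (4), \cref{lem:2d-overlap} shows that any path $\sigma$ of length $g$ is contained as a subpath in at most $8r/g$ critical paths, and since $r = \ell - 1 \le \ell$, this is at most $8\ell/g$.

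There is essentially no obstacle beyond this bookkeeping, since all technical content lives in \cref{lem:2d-unique-path,lem:same-d-vertex-disjoint,lem:2d-overlap}. The only minor subtlety is making sure that the construction truly yields infinitely many graphs; this is handled by letting $r$ range over all powers of $2$, which gives a strictly increasing sequence of values of $n$, and checking that the constant $c_0$ in the bound $|P| \geq c_0 n$ can be chosen uniformly in $r$ (which it can, since $r/(4(r+1))$ is monotone increasing in $r$).
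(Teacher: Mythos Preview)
Your proposal is correct and matches the paper's approach: the proposition is simply a summary of the construction together with \cref{lem:2d-unique-path} and \cref{lem:2d-overlap}, and the paper does not provide a separate proof beyond exactly the parameter counts you list. The one small check you leave implicit is that critical paths never leave the grid (so property~(2) holds literally): since $x_1 + s \le (r-1)+r < 2r$ and $x_2 + \sum_{i=0}^{s-1} i \le (r^2-1) + r(r-1)/2 < 2r^2$, this is immediate.
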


\subsection{Application: \texorpdfstring{$O(n)$}{O(n)}-Shortcut Lower Bound}

From \cref{prop:2d-construction}, we can prove the following lower bound.

\begin{prop}\label{prop:1/4}
There exists an absolute constant $c_1 > 0$ and infinitely many  $n$-vertex directed unweighted graph $G = (V,E)$ such that for shortcut set $H$ of size $|H| \le c_1 n$, the graph $G\cup H$ must have diameter $\Omega(n^{1/4})$.
\end{prop}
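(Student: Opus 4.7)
The plan is to prove \cref{prop:1/4} via a potential-function argument on the construction of \cref{prop:2d-construction}. With $G, P$ as in that proposition, I would define
\[\Phi(H) \;:=\; \sum_{\pi \in P} \dist_{G \cup H}(s_\pi, t_\pi).\]
Since every critical path has length exactly $\ell - 1$ in $G$, the initial potential is $\Phi(\emptyset) = |P|(\ell - 1) = \Theta(n \cdot n^{1/4})$; and if $G \cup H$ has diameter $D$, then $\Phi(H) \leq |P| \cdot D$. So it suffices to bound the drop $\Phi(\emptyset) - \Phi(H)$ by $O(\ell \cdot |H|)$, after which picking $c_1$ small enough relative to $c_0$ will force $D = \Omega(\ell) = \Omega(n^{1/4})$.

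The first key step is a structural claim: for each $\pi \in P$, every shortest $s_\pi \leadsto t_\pi$ path in $G \cup H$ ``follows $\pi$'', in the sense that its shortcut endpoints all lie on $\pi$ and its $G$-segments are subpaths of $\pi$. The argument is that replacing each shortcut by any $G$-path realizing it (one exists since shortcuts are in the transitive closure) produces a $G$-walk from $s_\pi$ to $t_\pi$, which in the layered DAG $G$ is automatically simple, so by \cref{lem:2d-unique-path} this walk must equal $\pi$ itself. In particular, a shortcut $e = (u,v) \in H$ with $u$ in layer $a$ and $v$ in layer $b$ contributes exactly $(b-a) - 1$ to the saving of any critical path that uses it, and only critical paths passing through \emph{both} $u$ and $v$ can possibly use it.

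The second key step is to bound the number of critical paths passing through a fixed pair $(u,v)$ by $8\ell/(b-a)$. Since $\pi_s$ takes the nonzero edge at layer $i$ iff $q_i < s$, the requirement that $\pi_s$ go from $u$ to $v$ translates into prescribed values for both the cardinality and the $q_i$-sum of the set $T := \{i \in \{a,\dots,b-1\} : q_i < s\}$. As $T$ is monotone in $s$, fixing its cardinality forces $T$ (and hence the $u$-to-$v$ subpath of $\pi_s$) to be constant over the valid $s$-interval, so all such critical paths share a common $u$-to-$v$ subpath; applying Property~4 of \cref{prop:2d-construction} to that common subpath then yields the claimed count, and since the start vertex of $\pi_s$ is determined by $u$ and $s$, each valid $s$ contributes at most one critical path. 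Each shortcut $e$ therefore contributes at most $((b-a) - 1) \cdot 8\ell/(b-a) \leq 8\ell$ to the drop.

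Summing over shortcuts gives $\Phi(\emptyset) - \Phi(H) \leq 8\ell \cdot |H| \leq 8 c_1 \ell n$; combining with the matching lower bound $\Phi(\emptyset) - \Phi(H) \geq c_0 n (\ell - 1 - D)$ and choosing $c_1 := c_0 / 16$ forces $D \geq \ell/2 - 1 = \Omega(n^{1/4})$. The step I expect to be the main obstacle is the structural claim; once one is comfortable with unshortcutting in the layered DAG and invoking \cref{lem:2d-unique-path}, the rest is careful bookkeeping around Property~4, together with the monotonicity observation that all critical paths through $(u,v)$ share the same $u$-to-$v$ subpath so that Property~4 is applied cleanly once per shortcut.
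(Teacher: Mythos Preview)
Your approach is correct and essentially the same as the paper's: the same potential $\Phi(H)=\sum_{\pi\in P}\dist_{G\cup H}(s_\pi,t_\pi)$, the same $8\ell$ bound on the drop per shortcut via Property~4, and the same averaging conclusion. The only cosmetic differences are that the paper bounds the drop incrementally (one shortcut at a time, via triangle inequality on $\dist_1$ vs.\ $\dist_2$) rather than via your global unshortcutting argument, and the paper gets the uniqueness of the $u\leadsto v$ subpath directly from \cref{lem:2d-unique-path} (if two $u\leadsto v$ paths existed, a critical path through $u,v$ would fail uniqueness) rather than via your monotonicity-of-$T(s)$ argument; both routes are valid and yield identical bounds.
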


\begin{proof}
    Take a graph $G = (V,E)$ from \cref{prop:2d-construction} with path set $P$ of size $|P| = 10 c_1 n$ for a suitable constant $c_1$. Let $H$ be a shortcut set of size $|H| \le c_1 n$ on $G$. 

    We define the following potential function over shortcut sets $H$ as 
    \[\phi(H) = \sum_{\pi \in P} \dist_{G\cup H}(s_\pi,t_\pi),\]
    i.e., $\phi(H)$ is the sum over the distances between pairs of endpoints of the critical paths after adding the shortcut set $H$. Then by \cref{prop:2d-construction} we have 
    \[\phi(\varnothing) = \sum_{\pi \in P}\dist_{G}(s_\pi,t_\pi) = 10 c_1 n\cdot (\ell - 1).\]
    Now for any path $\sigma: u\leadsto v$ of length $g$ in $G$, adding the shortcut $(u,v)$ only affects the critical paths $\pi$ that contains $\sigma$ as a subpath by \cref{lem:2d-unique-path}.
     For any shortcut set $H'$ and not including the shortcut $(u,v)$, let $\dist_1(\cdot, \cdot)$ denote the distance function $\dist_{G\cup H'}(\cdot, \cdot)$ and $\dist_2$ denote the distance function $\dist_{G\cup H' \cup \{(u,v)\}}$ for simplicity. Then we have
    \begin{align*}
     & \dist_1(s_\pi, t_\pi) - \dist_2(s_\pi, t_\pi) \\
      \le & \dist_1(s_\pi, t_\pi) - \min\left\{\dist_1(s_\pi, t_\pi), \dist_2(s_\pi, u) + \dist_2(u, v) + \dist_2(v, t_\pi) \right\}\\
      \le & \max\left\{0, \dist_1(s_\pi, t_\pi) - \left( \dist_2(s_\pi, u) + \dist_2(u, v) + \dist_2(v, t_\pi)\right) \right\}\\
      \le & \max\left\{0, \left( \dist_1(s_\pi, u) + \dist_1(u, v) + \dist_1(v, t_\pi) \right) - \left( \dist_2(s_\pi, u) + \dist_2(u, v) + \dist_2(v, t_\pi)\right) \right\}\\
      \le & \dist_1(u, v) - \dist_2(u, v)\\
      \le & g - 1 \le g.
    \end{align*}
    By \cref{item:2d-construction-item4} in \cref{prop:2d-construction}, adding the edge $(u,v)$ only affects the distance of $8\ell/g$ pairs of $(s_\pi, t_\pi)$ for $\pi\in P$, so we have for any shortcut set $H'$,
    \[\phi(H') - \phi(H'\cup \{(u,v)\})\le g\cdot \frac{8\ell}{g} = 8\ell.\]
    Therefore for a shortcut set $H$ of size $|H|\le c_1 n$, we have
    \[\phi(H) \ge \phi(\varnothing) - 8\ell c_1 n \ge 10(\ell - 1)c_1 n - 8\ell c_1 n = (2\ell - 10)c_1 n.\]
    Over $|P| = 10 c_1 n$ pairs of critical paths, the average distance is $\Theta(\ell)$, which implies that there must exists a pair with distance $\Theta(\ell)$. Then it follows that the diameter of $G\cup H$ is $\Theta(\ell) = \Theta(n^{1/4})$ by \cref{item:2d-construction-item1} in \cref{prop:2d-construction}.
\end{proof}

Similar to \cite{BH23}, \cref{prop:2d-construction} and \cref{prop:1/4} can be extended to show that, for every $p \in [1, c_1 n]$, there are $n$-node graphs for which adding $p$ shortcut edges keeps diameter as $\Omega(\frac{n}{p^{3/4}})$. The basic idea is to replace each vertex in the graph with a path of certain length, so that the number of vertices and number of layers increase, and $p$ remains the same. We will omit its proof for conciseness. For larger $p$, we can also obtain a lower bound by combining \cref{prop:1/4} and the path subsampling argument in \cite{BH23}. Directly applying their path subsampling argument, however, will incur the loss of $\log$ factors in the diameter lower bound and requires the use of randomization. Thus, we provide a different proof that is similar to the path subsampling argument. Additionally, our bound for superlinear $p$ is slightly better than that in \cite{BH23}, as they had a small non-tightness in their analysis.

\Tradeoff*

\begin{proof}
    Let $G = (V, E)$ be a graph from \cref{prop:1/4}, such that when adding $p = c_1 N$ shortcut edges, the diameter of $G$ is $\Omega(N^{1/4})$. Also, note that $G$ is a layered graph with $\Theta(N^{1/4})$ layers.  We will focus on the $ p > c_1 n$ case in the proof.

    We then create a graph $G'$ from $G$ as follows. Let $k = N / n$ and  let $V' \subseteq V$ be a subset of vertices that keep all vertices in every other $k$ layers so that $|V'| = n$. For every pair of vertices $u, v \in V'$, we add an edge from $u$ to $v$ in $G'$ if and only if $u$ can reach $v$ in $G$ and $u$ and $v$ are in two adjacent layers in $V'$. 
    
    Suppose for contradiction we can add $\le p$ shortcut edges $H$ so that the diameter of $G'$ becomes $o(\frac{n}{p^{3/4}}) = o(\frac{n}{N^{3/4}})$ (this would imply $N=o(n^{4/3})$, as otherwise, the diameter becomes $o(1) < 1$ which is impossible). Then we can add these same shortcut edges to $G$. For any pair of vertices $s, t \in V$ where $s$ can reach $t$, let $\pi$ be any path from $s$ to $t$. Then among the first $k$ vertices of $\pi$, there must be a vertex $s'$ in $V'$; similarly, among the last $k$ vertices of $\pi$, there must be a vertex $t'$ in $V'$. As $H$ reduces the diameter of $G'$ to $o(\frac{n}{N^{3/4}})$, and each edge in $G'$ corresponds to a path of length $k$ in $G$, the distance from $s'$  to $t'$ in $G \cup H$ must be $o(k \cdot \frac{n}{N^{3/4}}) = o(N^{1/4})$. Consequently, the distance from $s$ to $t$ in $G \cup H$ must be $\le 2k + o(N^{1/4}) = o(N^{1/4})$ ($k=o(N^{1/4})$  because $N=o(n^{4/3})$). This implies that the diameter of $G \cup H$ is at most $o(N^{1/4})$, contradicting  \cref{prop:1/4}.
\end{proof}

\cref{thm:1/4}, which we recall below, is an immediate corollary of \cref{thm:tradeoff} by setting $p = O(n)$:
\ONRegime*

\section{Higher Dimensional Construction}\label{sec:d-dim}

\subsection{Construction of \texorpdfstring{$G_d = (V_d, E_d)$}{Gd = (Vd, Ed)} with Path Set \texorpdfstring{$P_d$}{Pd}}

It is easy to generalize our construction to higher dimensions as follows. Let $d \ge 1$ be an integer and let $G_d = (V_d, E_d)$ be the graph defined as follows. Note that the dimension of the grid in each layer of the graph will be $d+1$ and that the base construction corresponds to the case where $d = 1$

\paragraph{Vertex set $V_d$} The graph has $\ell = d\cdot r + 1$ layers and each layer is the grid $\ints{2d\cdot r}^d \times \ints{2d\cdot r^2}$. We denote a vertex $v$ in layer $i$ as $(i, \vec{x})$ where $i$ specifies the layer and $\vec{x}\in \ints{2d\cdot r}^d \times \ints{2d\cdot r^2}$ is a grid point. 

\paragraph{Edge set $E_d$} Consider the set of integer vectors $W \subseteq \bin^d\times \ints{r}$, where the first $d$ coordinates of every vector in $W$ contains exactly a single one and $d-1$ zeros. Let $q$ be the same permutation on $\ints{r}$ as the one defined in \cref{sec:2d}. For each $k = 1,\dots, d$ and $i\in \ints{r}$, let $\vec{w}_{id+k-1}$ be the vector $(0,\dots, 0,1,0,\dots,0,q_i)$ where the $1$ is on the $k$-th coordinate. Note that $\vec{w}_0,\dots, \vec{w}_{dr-1}$ is a permutation of $W$.

We only add edges between adjacent layers as follows: between layer $i$ and $i+1$ for $i\in \ints{dr}$, define the edge vector set $E_i = \{\vec{w}_i, \vec{0}\}$ where $\vec{0}$ is the $(d+1)$-dimensional all zero vector. For every vertex $(i, \vec{x})$, we add edge connecting it to the vertex $(i + 1, \vec{x}+\vec{e})$ for $\vec{e}\in E_i$ if $\vec{x}+\vec{e} \in \ints{2d\cdot r}^d \times \ints{2d\cdot r^2}$. 

\paragraph{Critical Paths $P_d$} A critical path is defined by a start vertex $(0, \vec{x})$ where $\vec{x}\in \ints{r}^d \times \ints{r^2}$ and a set of vectors $\bigcup_{i = 1}^d \{0\}^{i-1}\times \{1\}\times \{0\}^{d-i}\times \ints{s_i}$ for $s_1,\dots, s_d\in [r]$. For any layer $i$, the critical path uses the edge corresponding to the nonzero vector $w_i$ if $w_i$ belongs to the set; otherwise, it uses the edge corresponding to $\vec{0}$. 

\paragraph{Properties of $G_d$} By definition, the graph $G_d = (V_d, E_d)$ has 
\begin{align*}
    |V_d| &= (\ell-1)\times (2dr)^d\times (2dr^2) = \Theta(r^{d+3}),\\
    |E_d| &\le 2 |V_d| = \Theta(r^{d+3}),\\
    |P_d| & = r^d\cdot r^2 \cdot r^d = \Theta(r^{2d+2}).
\end{align*}
It is not hard to check that $G_d$ also satisfies the following properties.
\begin{prop}[Properties of $G_d$]\label{prop:Gd-properties}
    The graph $G_d = (V_d, E_d)$ with critical path set $P_d$ satisfies the following:
    \begin{enumerate}
        \item \label{item:Gd-properties:unique-path} For every critical path $\pi\in P_d$, $\pi$ is the unique path between its endpoints in $G$.
        \item \label{item:Gd-properties:same-directions} For any two critical paths $\pi_1,\pi_2\in P_d$ associated with the same $s_1, \ldots, s_d$ (and thus have distinct starting vertices), we have that $\pi_1$ and $\pi_2$ are vertex-disjoint.
        \item \label{item:Gd-properties:overlap} Given any path $\sigma: u \leadsto v$ of length $g$ in $G_d$ from layer $a$ to layer $b$ where $a,b\in \ints{\ell}$ and $b-a = g$, there are at most $O((r / g)^d)$ critical paths containing $\sigma$.
    \end{enumerate}
\end{prop}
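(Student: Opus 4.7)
The plan is to extend \cref{lem:2d-unique-path}, \cref{lem:same-d-vertex-disjoint}, and \cref{lem:2d-overlap} to $d$ dimensions. Each of the three claims essentially repeats its one-dimensional counterpart; the only new work is routine bookkeeping to separate the $d$ coordinate-types of nonzero edges and to carry the bit-reversal argument through one type at a time.

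For \cref{item:Gd-properties:unique-path}, I fix a critical path $\pi$ starting at $(0,\vec x)$ with parameters $(s_1,\dots,s_d)$; its endpoint is $(dr,\,x_1+s_1,\dots,x_d+s_d,\,x_{d+1}+\sum_k s_k(s_k-1)/2)$. Any other path $\pi'$ with the same endpoints must, by comparing the $k$-th coordinate for $k\in[d]$, use exactly $s_k$ nonzero edges of ``type $k$'' (those carrying a $1$ in position $k$ of the first $d$ coordinates). Let $S_k\subseteq\ints{r}$ record the $q$-values carried by $\pi'$'s type-$k$ edges on the last coordinate; then $|S_k|=s_k$. The elementary bound $\sum_{v\in S_k} v\ge s_k(s_k-1)/2$, with equality iff $S_k=\ints{s_k}$, combined with the requirement that $\sum_{k=1}^d \sum_{v\in S_k} v = \sum_{k=1}^d s_k(s_k-1)/2$, forces equality in every summand. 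Hence $S_k=\ints{s_k}$ for each $k$; since the $q$-values at distinct type-$k$ layers are distinct (as $q$ is a permutation), this pins down exactly the layers where $\pi'$ takes nonzero edges, giving $\pi'=\pi$.

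For \cref{item:Gd-properties:same-directions}, note that a critical path's edge choice at layer $i$ depends only on $(s_1,\dots,s_d)$: it takes $\vec{w}_i$ iff $q_{\lfloor i/d\rfloor} < s_{(i\bmod d)+1}$. Two critical paths sharing $(s_1,\dots,s_d)$ therefore take the same edge vector at every layer, so the vector separating their positions is invariant across layers and distinct starts remain distinct forever. For \cref{item:Gd-properties:overlap}, I would generalize the bit-reversal window trick of \cref{lem:2d-overlap} dimension by dimension. Fix $\sigma:u\leadsto v$ from layer $a$ to layer $b=a+g$; by \cref{item:Gd-properties:same-directions} it suffices to count tuples $(s_1',\dots,s_d')$ such that some critical path through $u$ with these parameters contains $\sigma$. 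Fix one reference tuple $(s_1,\dots,s_d)$ corresponding to a critical path through $\sigma$. For each $k\in[d]$, the type-$k$ layers within $[a,b-1]$ form $\Theta(g/d)$ consecutive values of $\lfloor i/d\rfloor$, and I pick the smallest $k_0$ with $g/d \ge 2r/2^{k_0}$, so $2^{k_0}=O(rd/g)$. By the bit-reversal structure of $q$, this block contains an arithmetic progression $\{i',\,i'+2^{k_0},\dots,\,i'+r-2^{k_0}\}$ of $q$-values for some $i'$. If $|s_k-s_k'|>2^{k_0}$, then $\ints{s_k}\triangle\ints{s_k'}$ is an interval of more than $2^{k_0}$ consecutive integers and must hit this progression, forcing $\pi$ and $\pi'$ to disagree on some type-$k$ edge inside $\sigma$ — a contradiction. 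Hence each $s_k'$ lies within $O(rd/g)$ of $s_k$, giving at most $O((r/g)^d)$ tuples, with the implicit constant depending on $d$.

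The main obstacle I anticipate is purely bookkeeping in Part~3: one must verify the alignment between $\sigma$'s layer range $[a,b-1]$ and the type-$k$ sublayer indices so that a length-$(r/2^{k_0})$ block aligned at a multiple of $r/2^{k_0}$ really sits inside the truncated type-$k$ interval, and that the resulting AP of $q$-values lies in $\ints{r}$. This is where the $d$-dependent constants enter and where a sloppy index shift could lose a factor, but nothing genuinely new is needed beyond what \cref{lem:2d-overlap} already does in one dimension.
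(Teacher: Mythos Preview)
Your proposal is correct and follows essentially the same approach as the paper's proof: the paper also computes the endpoint coordinatewise and uses the minimality of $\sum_k s_k(s_k-1)/2$ to force $S_k=\ints{s_k}$ for \cref{item:Gd-properties:unique-path}, observes that identical parameters give identical edge choices for \cref{item:Gd-properties:same-directions}, and for \cref{item:Gd-properties:overlap} reduces to counting tuples $(s_1',\dots,s_d')$ and invokes the one-dimensional bit-reversal argument of \cref{lem:2d-overlap} separately for each $k$ on the $\Theta(g/d)$ consecutive type-$k$ indices to get $|s_k-s_k'|=O(r/g)$. Your writeup is in fact slightly more explicit than the paper's about the layer-indexing formula and the alignment issue you flag at the end; the paper simply asserts $\beta-\alpha=\Theta(g/d)=\Theta(g)$ and defers to the proof of \cref{lem:2d-overlap}.
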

\begin{proof}\
    \begin{enumerate}
        \item Suppose the start vertex for $\pi$ is $u=(0, x_1, \ldots, x_d, y)$, and $s_1, \ldots, s_d$ are the associated values for $\pi$. Then the end point $v$ of $\pi$ can be computed as 
        $$\left(\ell - 1, x_1 + s_1, x_2 + s_2, \ldots, x_d + s_d, y + \sum_{i=1}^d \frac{s_i (s_i-1)}{2}\right).$$
        Thus, for any path travelling from $u$ to $v$, it must use $s_i$ nonzero vectors from $\{0\}^{i-1} \times \{1\} \times \{0\}^{d-i-1} \times \ints{r}$ for every $i \in [d]$. Furthermore, the sum of the last coordinate of these vectors must equal  $\sum_{i=1}^d \frac{s_i (s_i-1)}{2}$. As a result, for each $i$, it must use $\{0\}^{i-1} \times \{1\} \times \{0\}^{d-i-1} \times \ints{s_i}$, because any other choices of vectors would make the sum of the last coordinate larger than desired. Hence, there is a unique path from $u$ to $v$.
        \item As $\pi_1,\pi_2$ have different start vertices, and at each layer, they will pick the same direction vector because they share the same $s_1, \ldots, s_d$, so they will stay different at each layer. 
        \item Suppose the path $\sigma$ is used by a critical path $\pi$ with initial point $(0, \vec{x})$ and with associated values $s_1, \ldots, s_d$. Now we want to bound the number of critical paths $\pi'\in P$ that can potentially contain $\sigma$ as a subpath. By \cref{item:Gd-properties:same-directions}, it suffices to count the number of $s_1', \ldots, s_d' \in \ints{r}$ such that some critical path $\pi'$ passing through the vertex $u$ in layer $a$ associated with $s_1', \ldots, s_d'$ can contain $\sigma$ as a subpath.

        For every $k \in [d]$, $E_{a} \setminus \{\vec{0}\}, E_{a+1} \setminus \{\vec{0}\}, \ldots, E_{b-1} \setminus \{\vec{0}\}$ contain $\left\{ 
\{0\}^{k-1} \times \{1\} \times \{0\}^{d-k} \times \{q_i\} \mid \alpha \le i < \beta \right\}$ for some $\alpha, \beta$ where $\beta - \alpha = \Theta(g/d) = \Theta(g)$. By the same proof as \cref{lem:2d-overlap}, if $|s_k' - s_k|> \Omega(r/g)$, then $\pi'$ cannot contain $\sigma$ as a subpath. As this holds for every $k \in [d]$, we get that the number of $\pi'$ that can contain $\sigma$ as a subpath is $O((r/g)^d)$. 
    \end{enumerate}
\end{proof}

\subsection{Application I: \texorpdfstring{$O(m)$}{O(m)}-Shortcut Lower Bound}

For our $O(m)$-shortcut lower bound, we will use $d=2$ in the graphs constructed above.  Let us summarize the properties.

\begin{cor}\label{cor:d=2-construction}
For infinitely many $n$, there exists an $n$-vertex directed unweighted graphs $G = (V, E)$ and path sets $P$, where $|E| \le 2n$ and $|P| = \Theta(n^{6/5})$, and with the following properties:
    \begin{enumerate}
        \item $G$ has $\ell = \Theta(n^{1/5})$ layers.
        \item Each path in $P$ starts in layer $0$ and ends in layer $\ell-1$.
        \item \label{item: d=2-construction: unique-path} Each path in $P$ is the unique path between its endpoints in $G$.
        \item \label{item: d=2-construction: overlap} For any path $\sigma$ of length $g$, there are at most $O((\ell / g)^2)$ paths in $P$ that contain $\sigma$.
    \end{enumerate}
\end{cor}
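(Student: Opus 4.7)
The plan is to simply instantiate the higher dimensional construction $G_d$ with $d=2$, and read off all four properties from \cref{prop:Gd-properties} together with the global counts already computed in the paragraph ``Properties of $G_d$.''

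First, I would set $d = 2$ and let $r$ range over powers of $2$, and take $G := G_2$ with critical path set $P := P_2$. By the paragraph ``Properties of $G_d$,'' this yields
\[
|V| \;=\; \Theta(r^{d+3}) \;=\; \Theta(r^5), \qquad |E| \;\le\; 2|V|, \qquad |P| \;=\; \Theta(r^{2d+2}) \;=\; \Theta(r^6),
\]
and the number of layers is $\ell = dr + 1 = 2r + 1 = \Theta(r)$. Setting $n := |V|$, we get $r = \Theta(n^{1/5})$, hence $\ell = \Theta(n^{1/5})$, $|E| \le 2n$, and $|P| = \Theta(r^6) = \Theta(n^{6/5})$. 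Because $r$ ranges over an infinite set, we obtain infinitely many valid values of $n$, as required.

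Next I would verify the four enumerated properties. Item 1 is exactly $\ell = \Theta(n^{1/5})$ computed above. Item 2 is immediate from the definition of critical paths in $G_d$: each critical path is specified by a starting vertex in layer $0$ and a tuple $(s_1,\dots,s_d)$, and at each of the $\ell - 1$ consecutive layers it takes either a zero or nonzero edge between adjacent layers, so it ends in layer $\ell - 1$. Item 3 is \cref{item:Gd-properties:unique-path} of \cref{prop:Gd-properties} applied with $d=2$. Item 4 is \cref{item:Gd-properties:overlap} of \cref{prop:Gd-properties} with $d=2$: any path $\sigma$ of length $g$ is contained in at most $O((r/g)^d) = O((r/g)^2) = O((\ell/g)^2)$ critical paths, using $r = \Theta(\ell)$.

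There is essentially no obstacle: the corollary is a direct specialization of the general construction, and every claim is an immediate substitution of $d = 2$ into formulas already established. The only point requiring a sentence of care is noting that the infinite family over powers of $2$ for $r$ yields infinitely many admissible $n$, and that the counting bounds $\Theta(r^5)$ and $\Theta(r^6)$ translate cleanly into $n$ and $n^{6/5}$ respectively, which is routine arithmetic.
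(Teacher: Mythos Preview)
Your proposal is correct and is exactly the intended argument: the paper presents this corollary as an immediate specialization of the general $G_d$ construction and \cref{prop:Gd-properties} with $d=2$, without giving a separate proof. Your substitution of $d=2$ into the counts $|V_d|=\Theta(r^{d+3})$, $|P_d|=\Theta(r^{2d+2})$, $\ell=dr+1$, together with the direct invocation of items \ref{item:Gd-properties:unique-path} and \ref{item:Gd-properties:overlap} of \cref{prop:Gd-properties}, matches the paper's approach precisely.
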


From \cref{cor:d=2-construction}, we can prove the following lower bound.

\OMRegime*

\begin{proof}
    Take any graph $G = (V,E)$ with the associated critical paths $P$ from \cref{cor:d=2-construction}. 
    
    Let $c$ be a parameter to be chosen later. Let $E'$ a set of edges that contains $(u, v)$ to $E'$ if $\dist_G(u, v) \le c$. 

    Let $\Phi(H)$ be a potential function defined as 
    $$\Phi(H) = \sum_{\pi \in P} \dist_{G \cup E' \cup H} (s_\pi, t_\pi). $$
    Initially, $$\Phi(\emptyset) = |P| \cdot \frac{\ell}{c} = \Theta\left(\frac{n^{7/5}}{c}\right).$$

    Now, suppose we have a shortcut set $H'$, and we add  an edge $(u, v)$ to $H'$, where $u$ and $v$ are $g$ layers apart and $u$ can reach $v$ in $G$. For any $\pi \in P$, if $\pi$ does not have a subpath from $u$ to $v$, then adding $(u, v)$ will not affect the distance from $s_\pi$ to $t_\pi$, by \cref{item: d=2-construction: unique-path} in \cref{cor:d=2-construction}.

    If any critical path $\pi$ contains a subpath from $u$ to $v$, then there exists a unique path $\sigma$ from $u$ to $v$ in $G$, because otherwise, the critical path $\pi$ will not be the unique path from $s_\pi$ to $t_\pi$. By \cref{item: d=2-construction: overlap} in \cref{cor:d=2-construction}, the number of critical paths that contain $\sigma$ is $O((\ell/g)^2)$, which implies that the number of critical paths that contain a subpath from $u$ to $v$ is $O((\ell/g)^2)$. 

    Therefore, adding the edge $(u, v)$ will only affect the distance of $O((\ell/g)^2)$ pairs of $(s_\pi, t_\pi)$ for $\pi \in P$. For each such $\pi$ (for simplicity, we use $\dist_1$ to denote $\dist_{G \cup E' \cup H'}$ and $\dist_2$ to denote $\dist_{G \cup E' \cup H' \cup \{(u, v)\}}$ in the following), 
    \begin{align*}
    & \dist_1(s_\pi, t_\pi) - \dist_2(s_\pi, t_\pi) \\ 
    \le & \dist_1(s_\pi, t_\pi) - \min\left\{\dist_1(s_\pi, t_\pi), \dist_2(s_\pi, u) + \dist_2(u, w) + \dist_2(w, t_\pi) \right\}\\
    \le & \max\left\{0, \dist_1(s_\pi, t_\pi) - \left( \dist_2(s_\pi, u) + \dist_2(u, w) + \dist_2(w, t_\pi)\right) \right\}\\
     \le & \max\left\{0, \left( \dist_1(s_\pi, u) + \dist_1(u, w) + \dist_1(w, t_\pi) \right) - \left( \dist_2(s_\pi, u) + \dist_2(u, w) + \dist_2(w, t_\pi)\right) \right\}\\
     \le & \dist_1(u, w) - \dist_2(u, w)\\
     \le & \lceil g/c \rceil - 1. 
    \end{align*}
    Therefore, if $g \le c$, $\Phi(H') - \Phi(H' \cup \{(u, w)\}) = 0$; if $g > c$, $\Phi(H') - \Phi(H' \cup \{(u, w)\}) \le O((\ell / g)^2) \cdot (\lceil g/c \rceil - 1)=O(\frac{\ell^2}{gc}) = O(\frac{\ell^2}{c^2})$. 

    Thus, as $H$ contains at most $Cm \le 2Cn$ edges, 
    $$\Phi(H) \ge \Phi(\emptyset) - 2Cn \cdot O\left(\frac{\ell^2}{c^2}\right) = \Theta\left(\frac{n^{7/5}}{c}\right) - O\left(\frac{C n^{7/5}}{c^2}\right).$$
    Note that the constant factors hidden in $\Theta$ and $O$  above are independent of $C$ and $c$. Therefore, by setting $c$ big enough, we can ensure that $\Phi(H) \ge \Theta\left(\frac{n^{7/5}}{c}\right)$. By averaging, this implies that there exists $\pi \in P$ where $$\dist_{G \cup E' \cup H}(s_\pi, t_\pi) \ge \Omega\left(\frac{n^{1/5}}{c}\right).$$
    Also, as $\dist_{G \cup H}(s_\pi, t_\pi) \ge \dist_{G \cup E' \cup H}(s_\pi, t_\pi)$, we get that the diameter of $G \cup H$ is $\Omega\left(\frac{n^{1/5}}{c}\right)$. The theorem follows as $c$ is a constant only depending on $C$. 
\end{proof}

\subsection{Application II: Near-Quadratic-Shortcut Lower Bound}

Using the construction of $G_d$ (recall $|V_d| = \Theta(r^{d+3})$, $|P_d| = \Theta(r^{2d+2})$ and other properties in \cref{prop:Gd-properties}), we can show the following \cref{thm:quadratic-lb} that recovers \cite[Theorem 4.1]{Hesse03}. One small improvement is that the average degree of our construction is a constant, while the average degree of Hesse's construction depends on $\eps$. 

\begin{theorem}\label{thm:quadratic-lb}
    For all $\eps > 0$, there exists some $\delta > 0$ such that there exist infinitely many $n$-vertex $O(n)$-edge graphs $G$  where adding any shortcut set of size $O(n^{2-\eps})$ keeps the diameter of $G$ at $\Omega(n^\delta)$.
\end{theorem}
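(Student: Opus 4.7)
The plan is to use the higher-dimensional construction $G_d$ from \cref{sec:d-dim} with a constant dimension $d$ chosen large enough as a function of $\eps$, and to adapt the potential-function argument from the proof of \cref{thm:1/5}, but with a threshold parameter $c$ that is itself a polynomial in $n$ (rather than a constant) tuned against $\eps$. Recall that $G_d$ has $n = \Theta(r^{d+3})$ vertices, $O(n)$ edges, a critical path set $P_d$ of size $\Theta(r^{2d+2})$, and $\Theta(r)$ layers, and satisfies the uniqueness and overlap properties in \cref{prop:Gd-properties}.

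Specifically, set $c = n^{(1-\eps)/(d-1)}$, let $E'$ be the set of all reachable pairs $(u,v)$ with $\dist_{G_d}(u,v) \le c$, and define
$$\Phi(H) = \sum_{\pi \in P_d} \dist_{G_d \cup E' \cup H}(s_\pi, t_\pi).$$
Since each critical path is the unique $s_\pi$-to-$t_\pi$ path in $G_d$ (by \cref{item:Gd-properties:unique-path}) and has length $\Theta(r)$, adding $E'$ contracts this length by roughly a factor of $c$, so $\Phi(\emptyset) = \Theta(|P_d|\cdot r/c) = \Theta(r^{2d+3}/c)$. For a shortcut $(u,v)$ whose unique $G_d$-path has length $g$: if $g \le c$ then $(u,v)\in E'$ and the potential does not change; otherwise \cref{item:Gd-properties:overlap} bounds the number of affected critical paths by $O((r/g)^d)$, and the triangle-inequality argument from the proof of \cref{thm:1/5} shows that each such path's $s_\pi$-$t_\pi$ distance drops by at most $\lceil g/c \rceil - 1 \le g/c$. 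Thus each shortcut causes a potential drop of at most $O(r^d/(g^{d-1}c))$, which over $g \ge c$ is maximized near $g = c$ and equals $O(r^d/c^d)$.

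For any shortcut set $H$ with $|H| \le C n^{2-\eps}$, the total drop is thus at most $O(C n^{2-\eps}\cdot r^d/c^d)$. The key accounting is that $c^{d-1} = n^{1-\eps}$, so $n^{2-\eps} = n\cdot c^{d-1}$ and the total drop equals $O(C \cdot n\cdot r^d/c) = O(C\cdot r^{2d+3}/c)$, matching $\Phi(\emptyset)$ up to the factor $C$. Choosing $C$ small enough gives $\Phi(H) \ge \Phi(\emptyset)/2 = \Omega(r^{2d+3}/c)$, so by averaging over $|P_d|$ critical paths some pair $(s_\pi, t_\pi)$ satisfies $\dist_{G_d \cup E'\cup H}(s_\pi, t_\pi) = \Omega(r/c)$. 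Since $E'$ only decreases distances, the diameter of $G_d \cup H$ is then $\Omega(r/c) = \Omega(n^\delta)$ where $\delta := \frac{1}{d+3} - \frac{1-\eps}{d-1}$.

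It remains to check that $\delta > 0$ can be arranged. Solving $\frac{1}{d+3} > \frac{1-\eps}{d-1}$ yields $\eps(d+3) > 4$, so any integer $d > 4/\eps - 3$ works; the same inequality also guarantees $c \le r$, which is what ensures $E'$ is a non-trivial (and not too large) subset of the transitive closure. The main delicate step is this parameter balancing: the drop budget $|H|\cdot r^d/c^d$ must match the initial potential $\Theta(r^{2d+3}/c)$, which forces $c = n^{(1-\eps)/(d-1)}$, and then requiring $r/c$ to be a positive power of $n$ forces $d$ to be a sufficiently large constant depending on $\eps$. Taking $r$ over all powers of $2$ then yields the required infinite family.
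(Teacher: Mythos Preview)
Your argument is essentially the paper's own: the same potential function with the auxiliary edge set $E'$, the same parameter $c = n^{(1-\eps)/(d-1)} = r^{(1-\eps)(d+3)/(d-1)}$, the same per-edge drop bound $O(r^d/c^d)$, and the same final averaging; your $\delta = \tfrac{1}{d+3} - \tfrac{1-\eps}{d-1}$ is in fact the exact exponent, while the paper only records the looser $\delta = \tfrac{\eps}{2(d+3)}$ coming from its condition on $d$. One small slip: the phrase ``Choosing $C$ small enough'' is backwards, since $C$ is the given constant in $O(n^{2-\eps})$; the easy fix (left implicit in the paper too, and explicit in the proof of \cref{thm:1/5}) is to multiply $c$ by a constant $A$ depending on $C$, which for $d\ge 2$ shrinks the total drop relative to $\Phi(\emptyset)$ by $A^{d-1}$ and only affects the hidden constant in $\Omega(n^\delta)$.
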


\begin{proof}
Let $d$ be a large enough integer so that $(1-\eps) \cdot \frac{d+3}{d-1} < 1 - \eps / 2$. Similar to the proof of \cref{thm:1/5}, we let $E'$ be a set of edges that contain $(u, v)$  if $\dist_{G_d}(u, v) \le c$, for some parameter $c$. 

Let $H$ be the shortcut set, and let $\Phi(H)$ be a potential function defined as 
    $$\Phi(H) = \sum_{\pi \in P_d} \dist_{G \cup E' \cup H} (s_\pi, t_\pi). $$

Initially, $$\Phi(\emptyset) = |P_d| \cdot \frac{\ell}{c} = \Theta\left(\frac{r^{2d+3}}{c}\right).$$

If we add an edge between two vertices that are $g \ge c$ layers apart, the potential function is dropped by at most 
$$O\lpr{(r/g)^d \cdot \frac{g}{c}} = O\left(\frac{r^d}{g^{d-1} c} \right) \le O\left(\frac{r^d}{c^{d}} \right).$$

Thus, if we add $O(r^{d+3} \cdot c^{d-1})$ edges, the potential is still $\Omega\left(\frac{r^{2d+3}}{c}\right)$, so the diameter of the graph is still $\Omega(r/c)$. By setting $c = r^{(1-\eps) \cdot \frac{d+3}{d-1}}$, we conclude  that after adding $$O(r^{d+3} \cdot c^{d-1}) = O(r^{d+3} \cdot r^{(1-\eps) \cdot (d+3)}) = O(n^{2-\eps})$$
edges, the diameter is still 
$$\Omega\lpr{\frac{r}{c}} = \Omega\lpr{\frac{r}{r^{(1-\eps) \cdot \frac{d+3}{d-1}}}} = \Omega(r^{\eps/2}) = \Omega(n^{\eps / (2(d+3))}).$$
\end{proof}

\section{Sourcewise \texorpdfstring{$O(n)$}{O(n)}-Shortcut Set}

Recall that in the sourcewise $O(n)$-shortcut set problem, we need to add $O(n)$ shortcut edges to a directed unweighted graph, so that $\max_{(s, v) \in S \times V, \dist(s, v) < \infty} \dist(s, v)$ is minimized. Similar to the $O(n)$-shortcut set problem, we are interested in how small the quantity could be. We provide both an upper and a lower bound for this problem. The upper bound adapts the techniques of Kogan  and  Parter~\cite{koganparter}, and the lower bound uses our higher dimensional construction in \cref{sec:d-dim}. 

\subsection{Lower Bound}

\begin{theorem}
\label{thm:SV-lower-bound}
For any integer $d \ge 2$, there exists a graph $G=(V, E)$ on $n$ vertices and $S \subseteq V$ with $|S| = \widetilde{\Theta}(n^{3/(d+3)})$, such that when adding  $O(n)$ (or $O(m)$) shortcuts, the sourcewise diameter is $\Omega(n^{1/(d+3)})$. 
\end{theorem}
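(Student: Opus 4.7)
The plan is to invoke the higher-dimensional graph $G_d=(V_d,E_d)$ from \cref{sec:d-dim} with parameter $r$ chosen so that $n=|V_d|=\Theta(r^{d+3})$, and to take $S$ to be a random subsample of the starting zone in layer $0$: include each starting vertex in $\{0\}\times\ints{r}^d\times\ints{r^2}$ independently with probability $p=\Theta(r^{1-d}\log^{2} n)$, so that $|S|=\widetilde\Theta(n^{3/(d+3)})$ and the sampled family $P_S:=\{\pi\in P_d:s_\pi\in S\}$ has size $|P_S|=|S|\cdot r^d=\widetilde\Theta(n)$, both with high probability by Chernoff.

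The key probabilistic step is a uniform concentration statement: with high probability over $S$, for every ordered pair $(u,v)\in V_d\times V_d$ with $v$ lying $g\ge 1$ layers ahead of $u$, the set $A(u,v)$ of critical paths passing through both $u$ and $v$ satisfies
\[
|A(u,v)\cap P_S|\;\le\;O\!\left(\tfrac{r\log^{2} n}{g^d}+\log n\right).
\]
Indeed, by \cref{item:Gd-properties:unique-path} any critical path through both $u$ and $v$ uses the unique length-$g$ path from $u$ to $v$ inside $G_d$, so \cref{item:Gd-properties:overlap} gives $|A(u,v)|\le O((r/g)^d)$; the expected intersection is $p\cdot|A(u,v)|=O(r\log^{2} n/g^d)$, and a Chernoff tail with sufficient additive slack, followed by a union bound over the $O(n^2)$ pairs, yields the uniform statement.

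Fix such an $S$ and imitate the potential analysis of \cref{thm:1/5}: set $E'=\{(u,v):\dist_{G_d}(u,v)\le c\}$ for a large enough absolute constant $c$ and define $\Phi(H)=\sum_{\pi\in P_S}\dist_{G_d\cup E'\cup H}(s_\pi,t_\pi)$, so $\Phi(\emptyset)=\Theta(|P_S|\cdot r/c)=\Theta(nr\log^{2} n/c)$. Exactly as in the proof of \cref{thm:1/5}, adding a shortcut $(u,v)$ of layer-distance $g$ drops $\Phi$ by $0$ if $g\le c$ and otherwise by at most $|A(u,v)\cap P_S|\cdot(\lceil g/c\rceil-1)\le O\!\left(r\log^{2} n/(g^{d-1}c)+g\log n/c\right)$ using the concentration bound. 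Summing over the $\le Cn$ shortcuts in $H$ (and using $g\le\ell=O(r)$ for the second term) gives a total drop of $O(nr\log^{2} n/c^d+nr\log n/c)$, which is at most $\tfrac12\Phi(\emptyset)$ provided $c$ is a sufficiently large constant (absorbing the $c^{d-1}$ slack, which uses $d\ge 2$) and $n$ is large enough (absorbing the $\log n$ slack). Hence $\Phi(H)\ge\tfrac12\Phi(\emptyset)$, and averaging over $|P_S|$ paths yields some $\pi\in P_S$ with $\dist_{G_d\cup E'\cup H}(s_\pi,t_\pi)=\Omega(r/c)=\Omega(n^{1/(d+3)})$; since $\dist_{G_d\cup H}\ge\dist_{G_d\cup E'\cup H}$ and $s_\pi\in S$, the sourcewise diameter of $G_d\cup H$ is $\Omega(n^{1/(d+3)})$, and the $O(m)$-shortcut statement follows verbatim because $|E_d|=O(n)$.

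The main obstacle is the uniform concentration step: because the adversary chooses $H$ after seeing $G_d$ and $S$, we must bound $|A(u,v)\cap P_S|$ for \emph{every} pair $(u,v)\in V_d\times V_d$ simultaneously and not only for the $O(n)$ pairs the adversary ultimately uses; a direct union bound over the $2^{\Theta(n\log n)}$ possible shortcut sets would fail, but the $O(n^2)$-union-bound on pairs succeeds once $|S|$ carries a $\log^{2} n$ factor. This oversampling is the sole source of the $\widetilde\Theta$ in the statement, and it is the one place the argument genuinely departs from the deterministic analysis of \cref{thm:1/5}; everything else is a direct adaptation of the potential-function machinery already developed there.
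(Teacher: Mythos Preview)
Your overall plan matches the paper's: sample $S$ from the starting zone with probability $p=\Theta(r^{1-d}\log^2 n)$, restrict the critical-path family to $P_S$, and run the same potential argument as in \cref{thm:1/5}. However, the concentration step as you have written it does not go through.

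The random variable $|A(u,v)\cap P_S|$ is not a sum of independent indicators over critical paths; it is the weighted sum $\sum_{s} X_s\,N_s$, where $X_s=\mathbb{1}[s\in S]$ and $N_s$ is the number of critical paths starting at $s$ that pass through both $u$ and $v$. A Chernoff/Bernstein tail for such a sum gives additive slack $\Theta(M\log n)$ with $M=\max_s N_s$, not $O(\log n)$. Now $N_s$ is bounded by the number of critical paths containing the $s\leadsto v$ subpath, which by \cref{item:Gd-properties:overlap} is $O((r/b)^d)$ with $b$ the layer of $v$. When $b$ is small this is huge: e.g.\ if $u$ lies in layer $0$ and $g=c+1$, then the only relevant $s$ is $s=u$ and $N_u=|A(u,v)|=\Theta((r/c)^d)$, so whenever $u\in S$ (which happens for $\Theta(r^3\log^2 n)$ choices of $u$) your claimed bound $O(r\log^2 n/g^d+\log n)$ is violated by a factor of $r^{d-1}/\log^2 n$. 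Plugging the correct slack $O((r/g)^d\log n)$ into the drop estimate yields a per-shortcut term of order $r^d\log n/(g^{d-1}c)$, and after summing over $Cn$ shortcuts this overwhelms $\Phi(\emptyset)$.

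The paper patches exactly this hole by enlarging $E'$: in addition to the pairs at distance $\le c$, it also puts into $E'$ every pair $(u,v)$ with $u$ reaching $v$ and $v$ in the first $\ell/2$ layers. Any shortcut with $v$ in the first half then has zero effect on $\Phi$, and for shortcuts with $v$ in the second half one has $b\ge\ell/2=\Theta(r)$, hence $N_s=O(1)$ for every $s$, so Chernoff with an honest $O(\log n)$ additive slack is valid and your bound $|A(u,v)\cap P_S|\le O\bigl((r/g)^d q+\log n\bigr)$ holds. Since the graph is layered, these extra $E'$-edges only cost a constant factor in $\Phi(\emptyset)$ (once past layer $\ell/2$ you still need $\Theta(r/c)$ hops to reach $t_\pi$), and the rest of your analysis then goes through verbatim.
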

\begin{proof}
In $G_d$, we independently put each vertex in the first layer in $S$ with probability $q = \frac{\log^2 n}{r^{d-1}}$. Recall that the number of vertices in each layer of $G_d$ is $\Theta(r^{d+2})$, 
so with high probability, $|S| = \Theta(r^{d+2} \cdot q) = \Theta(r^3 \cdot \log^2 n) = \Theta(n^{3/(d+3)}  \cdot \log^2 n)$. 

Similar to the proof of \cref{thm:1/5}, let $c$ be a parameter to be chosen later. Let $E'$ be a set of edges that contains $(u, v)$  if $\dist_G(u, v) \le c$. Additionally, we add $(u, v)$ to $E'$ if $u$ can reach $v$ and $v$ is in the first $\ell / 2$ layers. 

Let the potential function $\Phi(H)$ for a set of shortcut edges $H$ be defined  as 
$$\Phi(H) = \sum_{\pi \in P_d, s_\pi \in S} \dist_{G_d \cup E' \cup H}(s_\pi, t_\pi), $$
where we recall that $P_d$ is the set of critical paths in $G_d$. Because of the way we randomly pick $S$, the number of critical paths $\pi$ with $s_\pi \in S$ is $\Theta(|P_d| \cdot q) = \Theta(r^{d+3} \cdot \log^2 n)$ (recall $|P_d| = \Theta(r^{2d+2})$). Therefore, initially, $\Phi(\emptyset) = \Theta(r^{d+4} \cdot \log^2 n / c)$.

Now consider adding an edge $(u, v)$ to $H$  where $u$ can reach $v$ in $G$. Since if $\dist_{G_d}(u, v) \le c$, or if $v$ is in the first $\ell / 2$ layers, adding it will not change the value of the potential function, we assume that $\dist_{G_d}(u, v) = g > c$ and $v$ is in the second half of layers. First, by \cref{prop:Gd-properties}, the number of critical paths in $P$ using the subpath from $u$ to $v$ is at most $O((r/g)^d)$. Furthermore, for each vertex $s$ in the first layer, the number of critical paths starting from $s$ and at the same time using the subpath from $u$ to $v$ must be using the $s\leadsto v$ subpath in $G_d$. As a result, for each fixed $s$, there are $O((r/\dist_{G_d}(s, v))^d) = O(1)$ critical paths starting from $s$ and using the subpath from $u$ to $v$. Then by the random sampling of $S$, with high probability, the number of critical paths $\pi$ in $P$ using the subpath from $u$ to $v$ with $s_\pi \in S$ is 
$$O(\max\{\log n, (r/g)^d \cdot q\}).$$
This implies that, with high probability, the drop of potential caused by adding $(u, v)$ is 
$$O(\max\{g/c \cdot \log n, g/c \cdot (r/g)^d \cdot q\}).$$
Because $c \le g \le r$, the above can be bounded by 
$$O\left(\max\left\{r \log n / c, \frac{r \log^2 n}{c^{d}}\right\}\right).$$
If we add $O(n)$ edges to $H$, then the total potential drop can be bounded by $$O\left(\max\left\{r^{d+4} \log n/c, \frac{r^{d+4} \log^2 n}{c^{d}}\right\}\right).$$
By taking large enough $n$ and  setting $c$ as a large enough constant, we can make it so that the above is less than $\Phi(\emptyset) / 2$. Therefore, $\Phi(H) = \Theta(r^{d+4} \log^2 n / c)$ for $H=O(n)$. By averaging, some critical path $\pi$ with $s_\pi \in S$ will have length $\Omega(r / c)$. 

The same proof works for the $O(m)$-shortcut case because $G_d$ has $m = \Theta(n)$. 
\end{proof}

\subsection{Upper Bound}

We can assume without loss of generality that the input graph $G$ is a directed acyclic graph (DAG) by the following  trick used in prior works (e.g., see \cite{raskhodnikova2010transitive}):
for each strongly connected component, we  add a star connecting every vertex from and to the same vertex, which would reduce the diameter of each strongly connected component to $\le 2$ using only $O(n)$ shortcuts in total. Then we only need to keep one representative vertex in each strongly connected component, and the resulting graph would become a DAG. 

\begin{lemma}
    Given a graph $G = (V, E)$ and a subset $S\subseteq V$, one can reduce the sourcewise diameter to $\tO(\sqrt{|S|})$ using $O(n)$ shortcuts.
\end{lemma}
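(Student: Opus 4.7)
The plan is to adapt the Kogan-Parter shortcut set construction \cite{koganparter} to the sourcewise setting. We first invoke the SCC-contraction trick described above to reduce to the case where $G$ is a DAG, paying $O(n)$ shortcuts. Set the target sourcewise diameter $D := \sqrt{|S|}$, and fix for each vertex $v$ a canonical longest outgoing path $P_v$.

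When $|S| \ge n^{2/3}$ (equivalently $D \ge n^{1/3}$), a direct application of the Kogan-Parter construction already yields all-pairs diameter $D$ with $\tO((n/D)^{3/2}) = \tO(n^{3/2}/|S|^{3/4}) = \tO(n)$ shortcuts, so the sourcewise bound follows immediately.

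For $|S| < n^{2/3}$ we set up a hierarchy of nested uniform random pivot samples $V = R_0 \supseteq R_1 \supseteq \cdots \supseteq R_k$ with $|R_i| = \tO(|R_{i-1}|/D)$, stopping at the level $k = O(\log n/\log|S|)$ at which $|R_k|^2 = O(n)$. A standard union-bound argument over the canonical paths $\{P_v\}$ shows that with high probability, at every level $i$, every path of length $\ge D$ in $G$ meets an $R_i$-pivot within every $D$-window of $R_{i-1}$-pivots. We add three families of shortcut edges: (a) from each $s \in S$ to each $R_1$-pivot reachable within $D$ steps in $G$ (expected $O(1)$ per source, total $\tO(|S|)$); (b) for each level $i \ge 2$ and each $R_{i-1}$-pivot $r$, shortcuts between $r$ and each $R_i$-pivot within $D$ hops in the $R_{i-1}$-subgraph, in both directions (expected $O(|R_{i-1}|)$ per level, summing geometrically to $\tO(n)$); and (c) all transitive-closure edges among $R_k$ (cost $|R_k|^2 = O(n)$). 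Routing from $s$ to any reachable $v$ then follows the actual shortest $s$-to-$v$ path $\pi$ in $G$ by picking nested pivots $r_1,\ldots,r_k$ on $\pi$ near $s$ and $r_k',\ldots,r_1'$ on $\pi$ near $v$, yielding $2k+1$ shortcut hops plus at most $D$ final walk steps in $G$, totaling $O(\log n/\log|S| + \sqrt{|S|}) = \tO(\sqrt{|S|})$.

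The main obstacle is the hierarchical calibration: the sample sizes $|R_i|$, the depth $k$, and the radius-$D$ neighborhoods in step (b) must be coordinated so that (i) the sampling property holds at every level simultaneously with high probability, (ii) a valid routing exists along the actual shortest $s$-to-$v$ path at each level (not merely along the canonical paths $P_v$ used to establish the sampling property), and (iii) the cumulative shortcut count remains $O(n)$, with the log factors from the union bound absorbed into the $\tO(\sqrt{|S|})$ diameter bound. Step (b) is the most delicate accounting: by our choice of sample rate, the expected number of $R_i$-pivots reachable in $D$ steps from a fixed $R_{i-1}$-pivot is $O(1)$, so (b) sums to $O(n)$ in expectation, and a concentration argument converts this into a high-probability bound at the cost of absorbing only polylogarithmic factors into the diameter.
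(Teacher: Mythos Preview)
Your approach for $|S| \ge n^{2/3}$ is fine: invoking Kogan--Parter directly gives all-pairs diameter $D=\sqrt{|S|}$ with $\tO((n/D)^{3/2}) = \tO(n)$ shortcuts. The problem is the $|S| < n^{2/3}$ regime, where your hierarchical vertex-sampling scheme has a genuine gap in the edge accounting.

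In step~(a) you assert that from each $s\in S$ there are ``expected $O(1)$'' $R_1$-pivots reachable within $D$ steps. This is false in general DAGs: a vertex can reach $\Theta(n)$ other vertices within $D$ hops (think of a bounded-degree DAG of depth $\le D$), and since $R_1$ has density $\tO(1/D)$ the expected number of such pivots is $\tO(n/D)$, not $O(1)$. The $O(1)$ bound you have in mind holds for the number of pivots \emph{on a fixed path of length $D$}, not for the number of pivots in the entire $D$-step reachability ball. Step~(b) has the identical defect at every level (and the phrase ``$D$ hops in the $R_{i-1}$-subgraph'' is never defined; any natural definition still allows large fan-out). You correctly flag as an obstacle that the sampling property is only established over the canonical paths $\{P_v\}$ while routing needs to follow the actual shortest $s$--$v$ path; but there is no fix proposed, and indeed one cannot union-bound over all $s$--$v$ shortest paths. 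Restricting the added shortcuts to pivots on canonical paths would repair the edge count but break the routing.

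This obstacle is exactly what distinguishes the folklore sampling construction from Kogan--Parter, and it is why the paper's proof does \emph{not} use vertex sampling at this step. Instead it applies the chain/antichain decomposition of the transitive closure: one obtains $O(n/D)$ chains, subsamples them with probability $\Theta(\log n/D)$ to get $\tO(n/D^2)$ chains, shortcuts each chain internally to diameter $\tO(1)$ using $\tO(n)$ edges total, and then adds from every $s\in S$ one edge to each sampled chain (to the first reachable vertex on it). The point is that the number of sampled \emph{chains} is $\tO(n/D^2)=\tO(n/|S|)$ regardless of fan-out, so the source-to-chain edges cost $|S|\cdot\tO(n/|S|)=\tO(n)$; the routing argument then comes straight from the Kogan--Parter analysis. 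Your hierarchy would need an analogous global combinatorial structure (not just random vertex samples) to control the edge count, and at that point it collapses back into the paper's construction.
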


\begin{proof}[Proof Sketch]
The proof of this lemma is analogous to the shortcut set construction in \cite{koganparter}, so we only provide a proof sketch. 

Let $D = \sqrt{|S|}$ be our desired diameter. 
First, given a DAG $G$, let $G^*$ be its transitive closure. It is known that (e.g., see \cite{conf/soda/GrandoniILPU21}) for every $\ell$, we can decompose the vertices of any DAG to $\ell$ chains (a chain is a path), and $2n / \ell$ antichains (an antichain is a set of vertices that do not have edges between each other). Thus, we can decompose $G^*$ into $\ell := \frac{16 n}{D}$ chains $\mathcal{P} = \{P_1, \ldots, P_\ell\}$ and $2n/\ell$ antichains. Then for each $P_i$, we use known method \cite{raskhodnikova2010transitive} that decreases the diameter of it to $\le 2$ by adding $O(|P_i| \log |P_i|)$ shortcut edges. Next, let $\mathcal{P}'$ be a set of chains, and we add $P_i$ to $\mathcal{P}'$ for every $i \in [\ell]$ with probability $\Theta(\log n / D)$. Finally, for every $s \in S$ and every $P_i \in \mathcal{P}'$, we add an edge $(s, u)$ to the shortcut set where $u$ is the first vertex on $P_i$ that is reachable from $s$ (this step is different from \cite{koganparter}, since here we add edges from vertices in $S$, while in \cite{koganparter}, they add edges from a random subset of vertices). 

It is not difficult to see that, with high probability, the number of shortcut edges added is $\tO(n + \frac{n}{D^2}\cdot |S|) = \tO(n)$. The correctness follows by the analysis in \cite{koganparter}.

To decrease the number of shortcut edges added to $O(n)$, we perform the following two modifications:
\begin{itemize}
    \item When we add $O(|P_i| \log |P_i|)$ shortcut edges to $P_i$ to make its diameter $\le 2$, we instead only add $O(|P_i|)$ edges to make its diameter $\le \tO(1)$. This can be done by applying the same method to the chain formed by $\frac{|P_i|}{\log |P_i|}$ evenly spaced-out vertices  on the original chain $P_i$. 
    \item Slightly increase $D$ to $\tO(\sqrt{|S|})$ so that the $\tO(\frac{n}{D^2} \cdot |S|)$ part of the edge bound becomes $O(n)$. 
\end{itemize}
\end{proof}

\bibliographystyle{alpha}
\bibliography{ref}

\end{document}